\def\be{\begin{equation}}
\def\ee{\end{equation}}
\def\bea{\begin{eqnarray}}
\def\eea{\end{eqnarray}}
\def\bsub{\begin{subequations}}
\def\esub{\end{subequations}}
\newtheorem{theorem}{Theorem}
\newtheorem{corollary}{Corollary}
\newtheorem{lemma}{Lemma}
\newtheorem{proposition}{Proposition}
\newenvironment{itquote}
{\begin{quote}\itshape}
{\end{quote}}
\begin{document}

\title{Generalized second law of thermodynamics in the Glosten-Milgrom model}

\newcommand{\cfm}{Capital Fund Management, 23 rue de l'Universit\'e, 75007 Paris, France}

\author{Pierre Carmier\footnote{
The views expressed in this publication are those of the author and do not purport to reflect those of Capital Fund Management.}}
\affiliation{\cfm}
\date{\today}

\begin{abstract}
We derive an upper bound for the expected gain of informed traders in the Glosten-Milgrom model with finite horizon, fully analogous to a generalized $2^{nd}$ law of thermodynamics.
This result extends that obtained by Touzo et al. \cite{Touzo} a couple of years ago. 
The proof relies on Bayesian inference (exploiting the invariance of the problem under consecutive game sequences) and an interesting entropic inequality.
We also provide numerical results both supporting the existence of a characteristic timescale in the model and illustrating the magnitude of gain fluctuations.
Other possible extensions are discussed.
\end{abstract}

\pacs{}

\maketitle

\section{Introduction}
\label{sec:intro}

In a recent paper \cite{Touzo}, Touzo et al. made a remarkable connection between information thermodynamics and an agent-based toy model describing how information is incorporated into prices.
The latter provides a simplified setting to describe the interaction between liquidity providers and liquidity takers with different information sets \cite{Kyle,GM,book}.
The goal of the present work is to emphasize how this connection can be strengthened by deriving a result fully analogous to a generalized $2^{nd}$ law of thermodynamics.

Thermodynamics is a fascinating topic which came to life in the middle of the $19^{th}$ century with the advent of the industrial revolution, 
beginning with the invention of the steam engine. 
Despite very profound discoveries (quantum mechanics and general relativity) which reshaped our understanding of fundamental physics, 
thermodynamical laws have managed to endure and inspire countless research directions. Their resilience is a testimony to their universality.
The famous $2^{nd}$ law of thermodynamics emphasizes the irreversibility of macroscopic physical processes (arrow of time). 
This is a statistical law which emerges at the macroscopic level from the interaction of a very large number of microsopic degrees of freedom \cite{Diu}.
A recent body of research known as stochastic thermodynamics has been instrumental in explaining the statistical nature of this law by providing a quantitative understanding 
of what happens to "small" systems, materialized in fluctuation theorems \cite{Seifert}. 
More to the point with respect to our objective in this paper, another line of research coined information thermodynamics \cite{IT} has sought to account for information flows in 
thermodynamic systems, thereby helping to solve the infamous paradox of Maxwell's demon.

Among the many different formulations of the $2^{nd}$ law, we shall be interested in the following one:
\begin{itquote}
No transfer of energy, aka work $W$, can be extracted on average from a cyclic transformation of a thermodynamic system $Y$ at constant temperature $T$...

...unless one can acquire some information $I$ on the system through a measurement $M$, in which case $\mathbb{E}[W] \leq TI(Y; M)$.
\end{itquote}
Our main result shall be an equivalent statement $\mathbb{E}[G_n] \leq TI(Y; X_{1:n})$ bounding the gain that an informed trader can extract on average 
from the market using his private information (notations shall be introduced below). 
As suggested in \cite{Touzo}, this can also be interpreted as a generalized no-arbitrage theorem, allowing for arbitrage only if private information is available.

The paper is organized as follows. The stage is set in Section \ref{sec:result}, along with the introduction of notations, and the central inequality is stated.
Proof of the inequality along with intermediate technical results are given in section \ref{sec:proof}.
Subsection \ref{subsec:Bayes} highlights that each step of the game is independent and, thanks to Bayesian inference, that the problem can be fully understood by studying a single step of the game. 
Subsection \ref{subsec:entropy} is devoted to proving the single step inequality which is shown can be reformulated as a statement on differentials of binomial entropy. 
Numerical illustrations are provided in section \ref{sec:Num}, additionally allowing to test various scaling hypotheses.
Finally, we conclude in section \ref{sec:Concl} and offer some perspective on our results.

\section{Notations and main result}
\label{sec:result}

The version of the Glosten-Milgrom model considered here follows closely that presented in \cite{Touzo}, namely a game featuring 3 different players: 
an informed trader, a noise trader and a market maker. 
The game proceeds via sequential (and repeated) interactions between the market maker, who sets break-even transaction prices, 
and traders chosen at random which post orders based on their private information. 
Before stating the inequality, let us start by defining some notations. 
Let $Y\in\{0, 1\}$ be the value of the asset known to the informed trader. 
The market maker's knowledge of this value is encoded in his prior distribution $p(Y) \sim {\cal B}(\theta)$, a Bernoulli distribution with parameter $\theta=p(Y=1)$.
A the beginning of each step $n$ of the game, the market maker posts bid $b_n$ and ask $a_n$ prices which represent his best guess of the asset's value, 
namely $b_n=\mathbb{E}[Y|x_{1:n-1}, X_n=0]$ and $a_n=\mathbb{E}[Y|x_{1:n-1}, X_n=1]$.
Here, $x_{1:n-1} = (x_1,...x_{n-1})$ are the past sequences of orders. 
The informed trader sends orders according to his perfect knowledge of the asset's value, while the noise trader is assumed to randomly buy or sell the asset.
The uncertainty for the market maker lies in the fact that he never knows which of the informed or noise trader is posting the order, 
the latter being chosen with a probability $0\leq \nu\leq 1$ which can also be interpreted as the frequency of informed traders in the population of traders.

As a zero-sum game, an important quantity is how much gain the informed trader is expected to extract from the noise trader. 
The central result of Touzo et al. was that the expected gain of the informed trader is upper bounded as $\mathbb{E}(G) \leq TH(Y)$, where
\be
\label{eq:temp}
T = \left(\frac{1+\nu}{2}\log(\frac{1+\nu}{1-\nu})\right)^{-1}
\ee
is an effective temperature characterizing the bath of noise traders and $H(Y) = -\theta\log{\theta} - (1-\theta)\log(1-\theta)$ 
is the binomial entropy which encodes the amount of privileged information held by the informed trader (or equivalently the amount of igorance of the market maker) at the start of the game. 
These quantities are plotted for illustrative purposes in Figure \ref{fig:entropy}.
\begin{figure}[]
\begin{center}
\includegraphics[angle=0,width=0.7\linewidth]{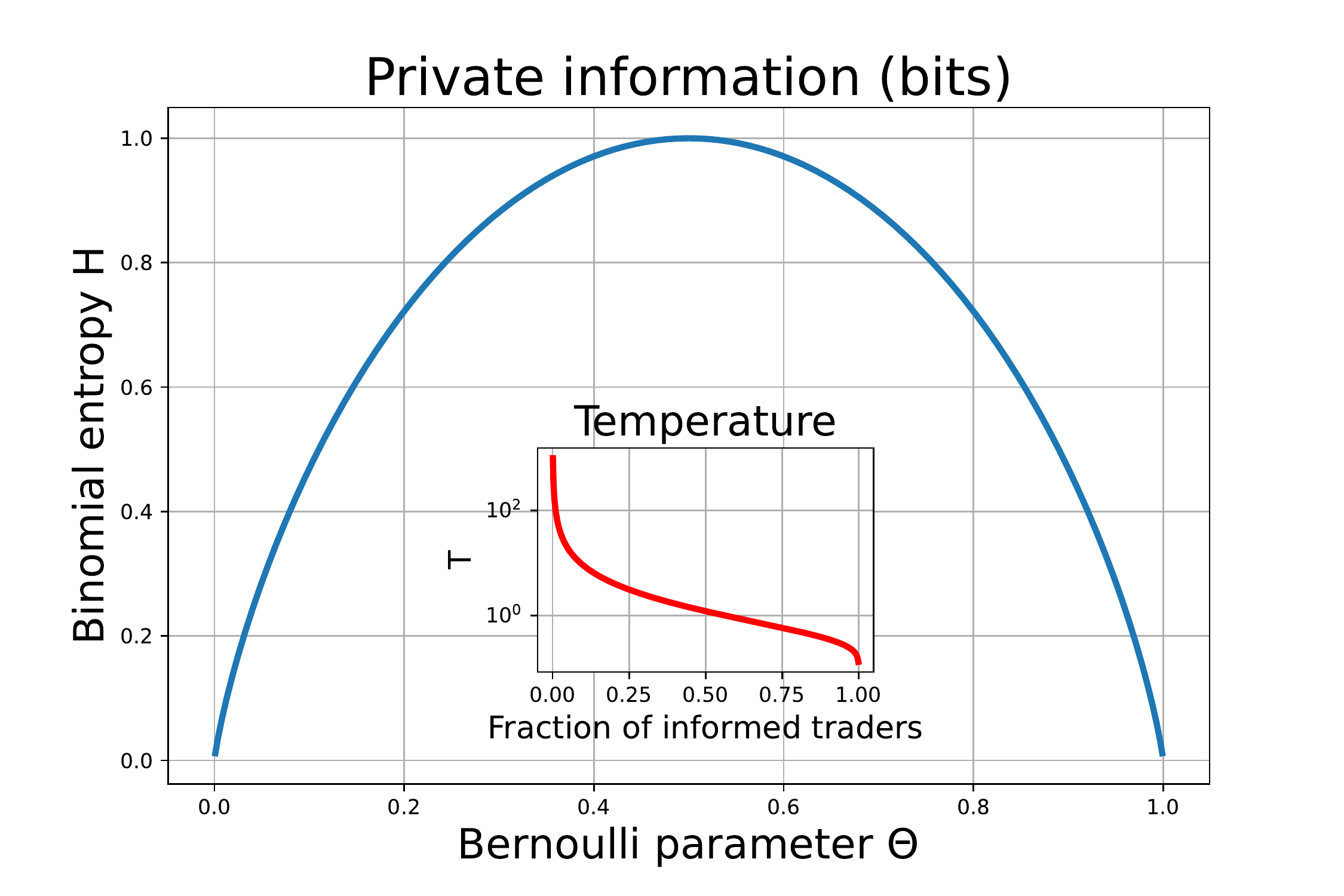}
\caption{The amount of private information possessed by the informed trader is quantified by the entropy of the market maker's prior distribution ${\cal B}(\theta)$, 
which is maximized for $\theta=0.5$. Inset: The effective temperature vanishes in the absence of noise traders and diverges as the fraction of informed traders goes to zero.}
\label{fig:entropy}
\end{center}
\end{figure}
The derivation of Eq.~\ref{eq:temp} can be found in \cite{Touzo} and was another important result.
The gain was defined as the wealth accumulated by the informed trader throughout the whole game, namely $G = \lim_{n\to\infty} G_n$ where 
\be
G_n \; \hat{=} \; \sum_{i=1}^n \mu_i
\ee
is the cumulative gain until time $n$, and the stochastic payoff at a single step $i$ is given by
\be
\mu_i = U_i \left(b_i(1-Y) + (1-a_i)Y\right)
\ee
with $U_i \sim {\cal B}(\nu)$ a random variable characterizing which type of trader is selected at step $i$.
Our main contribution is a tighter bound for the accumulated wealth at any time step $n$ stated below.

\begin{theorem}
\label{th}
Let $G_n=\sum_{i=1}^n \mu_i$ be the gain of the informed trader at time $n$, $I(Y; X_{1:n})$ the mutual information between the asset value $Y$ and the order series $X_{1:n}$, 
and $T$ the temperature given by Eq.~\ref{eq:temp}. Then
\be
\boxed{
\mathbb{E}[G_n] \leq TI(Y; X_{1:n})
}
\ee
with equality reached only as the fraction of informed traders $\nu\to0$.
\end{theorem}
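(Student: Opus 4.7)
The plan is to peel off the bound one step at a time: prove a single-step inequality $\mathbb{E}_\theta[\mu]\le T\,I_\theta(Y;X)$ valid for any Bernoulli prior $\mathcal{B}(\theta)$, then sum. Two ingredients make the reduction work. First, the chain rule of mutual information gives $I(Y;X_{1:n})=\sum_{i=1}^n I(Y;X_i\mid X_{1:i-1})$, while $\mathbb{E}[G_n]=\sum_{i=1}^n \mathbb{E}[\mu_i]$. Second, because the market maker uses break-even Bayesian quotes and both trader strategies are stationary in $Y$, the conditional law $p(Y\mid x_{1:i-1})$ is again Bernoulli with parameter $\theta_i=\mathbb{E}[Y\mid x_{1:i-1}]$. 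Conditionally on the history, step $i$ is thus an exact copy of step $1$ with prior $\mathcal{B}(\theta_i)$, and averaging the single-step bound over $x_{1:i-1}$ yields $\mathbb{E}[\mu_i]\le T\,I(Y;X_i\mid X_{1:i-1})$; summing over $i$ gives the theorem.

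\textbf{Single-step computation.} Using $p(X=1\mid Y=0)=(1-\nu)/2$ and $p(X=1\mid Y=1)=(1+\nu)/2$, Bayes gives $a=\theta(1+\nu)/[1+\nu(2\theta-1)]$ and $b=\theta(1-\nu)/[1-\nu(2\theta-1)]$, and a short calculation yields
\[
\mathbb{E}_\theta[\mu]=\frac{2\nu(1-\nu)\,\theta(1-\theta)}{1-\nu^2(2\theta-1)^2}.
\]
Since the binomial entropy is symmetric about $1/2$, $H(X\mid Y)=H((1+\nu)/2)$, so with $g(\alpha):=H((1+\alpha)/2)$ and $\alpha:=\nu(2\theta-1)\in[-\nu,\nu]$ one has $I_\theta(Y;X)=g(\alpha)-g(\nu)$. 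Writing $u:=2\theta-1$ and using $4\theta(1-\theta)=1-u^2$, the single-step claim reduces to the entropic inequality
\[
\frac{\nu(1-\nu^2)(1-u^2)}{4(1-\nu^2 u^2)}\,\log\frac{1+\nu}{1-\nu}\;\le\;g(\nu u)-g(\nu),\qquad u\in[-1,1].
\]

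\textbf{Entropic step and main obstacle.} This last inequality is where the real content lies. Both sides vanish at $u=\pm 1$ (the informed trader has no advantage when $Y$ is certain under the prior), and a Taylor expansion in $\nu$ shows they agree to leading order, recovering the equality case $\nu\to 0$ stated in the theorem. Using $g'(\alpha)=-\mathrm{arctanh}(\alpha)$, one has $g(\nu)-g(\nu u)=\int_{\nu u}^{\nu}\mathrm{arctanh}(t)\,dt$ and $1/T=-(1+\nu)\,g'(\nu)$, so both sides can be rewritten as statements on differentials of the binomial entropy, matching the paper's announcement. I would then close the inequality either by sign-analysing the second derivative of the gap $\Psi_\nu(u)$ between the two sides, or by comparing integrands after writing the RHS as the above integral and bounding $\mathrm{arctanh}$ by a suitable rational function. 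The genuine difficulty is that the LHS is rational in $u$ while the RHS is transcendental, but the double zeros of $\Psi_\nu$ at $u=\pm 1$ and its vanishing as $\nu\to 0$ constrain the inequality tightly enough that one of these elementary routes should succeed.
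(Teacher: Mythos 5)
Your reduction is correct and mirrors the paper's: the posterior stays Bernoulli (the paper's Proposition P1), the chain rule converts the theorem into a conditional single-step bound (the paper's Lemma L1), and your algebra for $\mathbb{E}_\theta[\mu]$, the bid/ask, and $I_\theta(Y;X)=g(\alpha)-g(\nu)$ is right. Your parameterization $\alpha=\nu(2\theta-1)$ and $u=2\theta-1$ is just a linear reshuffle of the paper's $q=(1+\nu)/2$, $z=q\theta+(1-q)(1-\theta)=(1+\alpha)/2$, so your final entropic inequality is identical to the paper's Lemma L2. The observations that both sides agree to leading order in $\nu$ (explaining the $\nu\to0$ equality) and that the gap $\Psi_\nu$ has double zeros at $u=\pm1$ are also correct.

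However, there is a genuine gap: you do not prove the entropic inequality, which is the real mathematical content of the theorem. Everything before it is bookkeeping; Lemma L2 is the hard part, and the paper spends two propositions (P3, P4) on it. You offer two candidate strategies — sign-analysing $\Psi''_\nu$, or bounding $\mathrm{arctanh}$ by a rational function after writing the right side as $\int_{\nu u}^{\nu}\mathrm{arctanh}(t)\,dt$ — and conclude that ``one of these elementary routes should succeed.'' That is a hope, not a proof, and the paper's own argument suggests neither route is as elementary as you expect. The paper's solution is to rewrite L2 (using $\theta(1-\theta)(2q-1)^2=z(1-z)-q(1-q)$) as
\[
\frac{h^{(1)}(q)}{h^{(3)}(q)}\bigl(h^{(2)}(z)-h^{(2)}(q)\bigr)\;\le\;h(z)-h(q),
\]
then introduce the non-obvious auxiliary function $f(x)=-h^{(3)}(q)\,h(x)+h^{(1)}(q)\,h^{(2)}(x)$ and show $f(z)\ge f(q)$ by proving $f^{(1)}\le0$ on $[1/2,q]$. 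That in turn requires showing $z\mapsto h^{(1)}(z)/h^{(3)}(z)$ is decreasing, which boils down to the sign of $g(z)=2z-1-2(1-3z+3z^2)\log\bigl(z/(1-z)\bigr)$ on $[1/2,1]$, itself established through one more derivative. The paper even remarks explicitly that concavity of $h$ alone is not enough — the inequality is tighter. Until you carry out one of your proposed routes to completion (and the double-zero structure at $u=\pm1$, while a useful sanity check, does not by itself control the sign of $\Psi_\nu$ in the interior), the proof is incomplete at precisely the step that distinguishes this theorem from a routine information-theoretic bound.
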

The regime $\nu\to0$ corresponds to that of infinite temperature $T\sim\nu^{-1}$, since the information on the asset value provided by the informed trader is drowned in the ambient noise.
Just as explained in \cite{Touzo}, the saturation of the bound in this limit can be understood as originating from the fact that the orders sent by the informed trader are well separated in time, 
such that the convergence of the market maker's estimation of the asset value occurs adiabatically.
In the opposite regime $\nu\to1$, the temperature goes to zero and equilibrium is reached immediately since the information provided by the informed trader can be incorporated in the price without any interference.

An immediate consequence of Theorem \ref{th} is the result of Touzo et al.:
\begin{corollary}
Let $G$ be the gain of the informed trader over an infinite horizon, $H(Y)$ the entropy of the asset value and $T$ the temperature given by Eq.~\ref{eq:temp}. Then
\be
\mathbb{E}[G] \leq TH(Y)
\ee
with equality only as $\nu\to0$.
\end{corollary}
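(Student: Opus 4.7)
The plan is to derive the corollary as a straightforward limit of Theorem \ref{th}. Using the identity $I(Y; X_{1:n}) = H(Y) - H(Y | X_{1:n})$ and non-negativity of the conditional entropy, one has the elementary bound $I(Y; X_{1:n}) \leq H(Y)$ uniformly in $n$. Chaining this with Theorem \ref{th} gives
\[
\mathbb{E}[G_n] \;\leq\; T\,I(Y; X_{1:n}) \;\leq\; T\,H(Y)
\]
for every finite horizon $n$, which is already the corollary at each finite $n$.

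The second step is to pass to the limit $n\to\infty$. Since bid and ask satisfy $b_i,a_i\in[0,1]$, each single-step payoff $\mu_i = U_i\bigl[b_i(1-Y) + (1-a_i)Y\bigr]$ is non-negative, so $G_n$ is a non-decreasing sequence of random variables converging (in $[0,+\infty]$) to $G$. Monotone convergence then yields $\mathbb{E}[G_n]\uparrow\mathbb{E}[G]$, and taking $n\to\infty$ in the chain of inequalities above delivers the announced bound $\mathbb{E}[G]\leq T\,H(Y)$.

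No step here is a genuine obstacle once Theorem \ref{th} is granted; the only mild subtlety is the interchange of limit and expectation, which is handled cleanly by monotone convergence thanks to positivity of the payoffs. As for the equality statement: Theorem \ref{th} itself saturates only in the limit $\nu\to 0$, and in that quasi-static regime the informed trader's orders are asymptotically well separated in time so that the market maker eventually learns $Y$ exactly, i.e.\ $H(Y|X_{1:n})\to 0$ and the auxiliary bound $I(Y;X_{1:n})\leq H(Y)$ also becomes tight. Combining these two observations yields the equality condition stated in the corollary.
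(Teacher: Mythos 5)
Your proof is correct and follows the same basic route as the paper: both rest on the decomposition $I(Y;X_{1:n}) = H(Y) - H(Y|X_{1:n})$. The paper takes the limit $n\to\infty$ directly and invokes $H(Y|X_{1:n})\to 0$, whereas you observe that non-negativity of conditional entropy already gives the uniform bound $I(Y;X_{1:n})\leq H(Y)$ at every finite $n$, and you only invoke $H(Y|X_{1:n})\to 0$ when discussing tightness. Your version is the more careful one: the paper silently passes $\lim_n \mathbb{E}[G_n] = \mathbb{E}[G]$, and you close that gap explicitly by noting that $\mu_i = U_i\bigl(b_i(1-Y)+(1-a_i)Y\bigr)\geq 0$, so $G_n$ increases to $G$ and monotone convergence applies. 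This is a genuine (if minor) improvement in rigor over what the paper writes down.
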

\begin{proof}
This is obtained asymptotically as $n\to\infty$ from Theorem \ref{th} using that 
\begin{equation}
\begin{aligned}
I(Y; X_{1:n}) \; &\hat{=} \; \int dx_{1:n}dy \; p(y, x_{1:n})\log\left(\frac{p(y, x_{1:n})}{p(y)p(x_{1:n})}\right)
\\
& = -\int dx_{1:n}dy \; p(y, x_{1:n})\log{p(y)} + \int dx_{1:n}p(x_{1:n})\;\int dy \; p(y|x_{1:n})\log{p(y|x_{1:n})}
\\
& = H(Y) - H(Y|X_{1:n})
\end{aligned}
\end{equation}
and $\lim_{n\to\infty}H(Y|X_{1:n})=0$ as the market maker's estimator of the asset value converges.
\end{proof}

\section{Proof}
\label{sec:proof}

We shall prove Theorem \ref{th} in this section using Bayesian inference and a reasonable amount of algebra. 
The first piece of the solution is to realize that Theorem \ref{th} is a direct consequence of the following intermediate result:
\begin{lemma}[L1]
\label{l1}
Let $\mu_i$ be the stochastic payoff of the informed trader at any step $i\geq 1$ and $T$ the temperature.
The following upper bound holds
\be
\mathbb{E}[\mu_i|x_{1:i-1}] \leq TI(Y; X_i|x_{1:i-1})
\ee
where $I(Y; X_i|x_{1:i-1})$ is the mutual information between the asset value $Y$ and the current order $X_i$ conditionally on the realized trajectory of past orders $x_{1:i-1}$.
Equality is reached only as $\nu\to 0$ (with $\mathbb{E}[\mu_i] = O(\nu)$).
\end{lemma}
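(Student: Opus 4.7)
The strategy is to reduce Lemma~\ref{l1} to a deterministic, two-parameter inequality, and then establish the latter by direct manipulations of the binomial entropy function.

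First, I would exploit the fact that, conditionally on the past trajectory $x_{1:i-1}$, step $i$ of the game is distributionally identical to a one-shot Glosten--Milgrom interaction whose prior on $Y$ is ${\cal B}(\theta)$, with $\theta := p(Y=1|x_{1:i-1})$. This rests on $U_i \sim {\cal B}(\nu)$ being independent of $(Y,x_{1:i-1})$ and on the Bayesian self-similarity anticipated in subsection~\ref{subsec:Bayes}. Consequently $\mathbb{E}[\mu_i|x_{1:i-1}]$ and $I(Y;X_i|x_{1:i-1})$ depend on $x_{1:i-1}$ only through $\theta$, so the conditional bound is equivalent to a deterministic inequality in $(\theta,\nu) \in [0,1]^2$.

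Next, I would compute each side in closed form. Writing $q := p(X_i=1|x_{1:i-1}) = \tfrac12[1+\nu(2\theta-1)]$ and applying Bayes' rule to $a_i$ and $b_i$ yields, after a short calculation,
\be
\mathbb{E}[\mu_i|x_{1:i-1}] \;=\; \nu\bigl[\theta(1-a_i) + (1-\theta)b_i\bigr] \;=\; \frac{\nu(1-\nu)\,\theta(1-\theta)}{2\,q(1-q)}.
\ee
For the mutual information I would use $I = H(X_i|x_{1:i-1}) - H(X_i|Y,x_{1:i-1})$; since $p(X_i=1|Y=y,x_{1:i-1})$ equals $(1+\nu)/2$ for $y=1$ and $(1-\nu)/2$ for $y=0$, both contribute the same binomial entropy, giving $I(Y;X_i|x_{1:i-1}) = H(q) - H\bigl(\tfrac{1+\nu}{2}\bigr)$.

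The core of the argument, and what I expect to be the main obstacle, is the resulting scalar inequality. Setting $a=(1-\nu)/2$, $b=(1+\nu)/2$, so that $q \in [a,b]$ and $T^{-1} = b\log(b/a)$, the target takes the equivalent form
\be
\frac{a(q-a)(b-q)}{(b-a)\,q(1-q)} \;\leq\; \frac{H(q)-H(b)}{b\log(b/a)}.
\ee
Both sides vanish at $q \in \{a,b\}$ (using $H(a)=H(b)$), are symmetric under $q \leftrightarrow 1-q$, and are non-negative on $[a,b]$. The naive second-order strong-concavity estimate ($-H'' \geq 4$) is not tight enough away from $q=1/2$, so instead I would exploit the exact Taylor identity $H(q)-H(b) = H'(q)(q-b) + \int_q^b \frac{b-s}{s(1-s)}\,ds$ together with its symmetric counterpart obtained by expanding around $q$ up to $a$ and invoking $H(a)=H(b)$. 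Combining these should recast the claim as the non-negativity of an integral whose integrand can be certified by elementary monotonicity, delivering the ``statement on differentials of binomial entropy'' announced for subsection~\ref{subsec:entropy}.

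Finally, as a sanity check, in the $\nu \to 0$ limit one has $T \sim 1/\nu$ and both sides of the scalar inequality scale as $\tfrac12\nu[1-(2\theta-1)^2] + O(\nu^2)$, so equality holds at leading order and the bound saturates, in agreement with the $\mathbb{E}[\mu_i]=O(\nu)$ caveat in the statement and with the adiabatic picture invoked after Theorem~\ref{th}.
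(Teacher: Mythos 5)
Your reduction of the conditional bound to a deterministic scalar inequality is correct and mirrors the paper's own derivation (Propositions P1 and P2), just in different notation: your $q = p(X_i=1|x_{1:i-1}) = \tfrac12[1+\nu(2\theta-1)]$ is the paper's $z$, and your $b=(1+\nu)/2$, $a=(1-\nu)/2$ are the paper's $q$ and $1-q$. The closed forms you compute for the conditional expected payoff, $\frac{\nu(1-\nu)\theta(1-\theta)}{2q(1-q)}$, and for the conditional mutual information, $H(q)-H(b)$, both check out, and your small-$\nu$ expansion correctly confirms saturation at leading order.

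However, there is a genuine gap at the step you yourself flag as the main obstacle: the scalar inequality
\begin{equation}
\frac{a(q-a)(b-q)}{(b-a)\,q(1-q)} \;\leq\; \frac{H(q)-H(b)}{b\log(b/a)}
\end{equation}
is never actually proven. You propose averaging the exact Taylor-remainder identities around $q$ out to $a$ and out to $b$ and invoking $H(a)=H(b)$, claiming this ``should recast the claim as the non-negativity of an integral whose integrand can be certified by elementary monotonicity.'' But the resulting decomposition contains the negative term $H'(q)(q-\tfrac12)$ alongside positive integral remainders, and it is not at all obvious that the sign can be controlled ``by elementary monotonicity''; you do not identify the integrand, and nothing in the proposal certifies the comparison. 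Since this inequality \emph{is} Lemma L1 (everything before it is bookkeeping), a sketch that ends with ``should work'' does not constitute a proof. The paper instead rewrites the scalar inequality, using the algebraic identity $\theta(1-\theta)(2q-1)^2=z(1-z)-q(1-q)$, as $f(z)\geq f(q)$ for the auxiliary function $f(x)=-h^{(3)}(q)h(x)+h^{(1)}(q)h^{(2)}(x)$, and then shows $f'(z)\leq 0$ on $[\tfrac12,q]$ by reducing to the monotonicity of $h^{(1)}/h^{(3)}$ and finally to the sign of $g(z)=2z-1-2(1-3z+3z^2)\log\bigl(\tfrac{z}{1-z}\bigr)$, which follows from $g(1/2)=0$ and $g'\leq 0$. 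This chain is concrete and verifiable, whereas your Taylor-remainder route remains a conjecture; if you want to pursue it you must exhibit the integrand explicitly and prove its sign.
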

The result stated in this Lemma is true for any particular order trajectory $x_{1:i-1}$. Taking the expectation over all possible trajectories on both sides yields
\begin{equation}
\mathbb{E}[\mu_i] \leq TI(Y; X_i|X_{1:i-1}) \; .
\end{equation}
From there, the chain rule \cite{Cover} for conditional mutual information $I(Y; X_{1:n}) = I(Y; X_n | X_{1:n-1}) + I(Y; X_{1:n-1})$ allows proving Theorem \ref{th}, as
\be
\begin{aligned}
\mathbb{E}[G_n] & = \sum_{i=1}^n\mathbb{E}[\mu_i] 
\\
& \leq T\sum_{i=1}^nI(Y; X_i | X_{1:i-1}) = T \sum_{i=1}^n \left(I(Y; X_{1:i}) - I(Y; X_{1:i-1})\right) = TI(Y; X_{1:n}) \; .
\end{aligned}
\ee
To prove Lemma \ref{l1}, we shall need to explicitate how the various quantities entering the inequality depend on the parameters of the Glosten-Milgrom model. 

\subsection{Bayesian inference}
\label{subsec:Bayes}

Let us start by observing that the market maker's estimation problem is very naturally framed as a Bayesian inference problem.
From that perspective, the market maker's knowledge of $Y$ at step $n$ can be encoded in his posterior distribution which simply updates the parameter $\theta$ accordingly. 
Let us illustrate how this works for the first step of the game. The likelihood for the market maker to observe an order $x$ given the asset value $y$ is 
\be
{\cal L}_{xy} \; \hat{=} \; p(X_1=x|Y=y)=\frac{1-\nu}{2}+\nu\delta_{x,y} \; ,
\ee
where the first term reflects the noise trader's absence of preference, while the second reflects the informed trader's perfect knowledge.
Using Bayes rule, the market maker's posterior distribution follows as 
\be
\label{eq:post}
p(Y=1|X_1=x) = \frac{p(X_1=x|Y=1)p(Y=1)}{p(X_1=x|Y=1)p(Y=1)+p(X_1=x|Y=0)p(Y=0)} = \frac{\theta {\cal L}_{x1}}{\theta {\cal L}_{x1} + (1-\theta)(1-{\cal L}_{x1})} \; .
\ee
Basically, the order received by the market maker acts as an informative measurement on the value of the asset which allows the market maker to update his belief on the asset's value 
$\theta\to \theta_1 \; \hat{=} \; \mathbb{E}[Y|x_1]=p(Y=1|x_1)$.
This result can be formalized as follows
\begin{proposition}[P1]
Given a Bernoulli prior distribution $p(Y)\sim{\cal B}(\theta)$ on the asset value $Y$ and a likelihood ${\cal L}_{xy}=\frac{1-\nu}{2} + \nu\delta_{xy}$, the posterior distribution obeys $p(Y|x_{1:n})\sim{\cal B}(\theta_n)$, 
where $\theta_n$ depends on $\nu$, $x_n$ and $\theta_{n-1}$.
\end{proposition}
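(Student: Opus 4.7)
The plan is to proceed by induction on $n$, using the fact that given $Y$ the orders are conditionally independent with the same likelihood ${\cal L}_{xy}$ at every step. The base case $n=1$ is already spelled out in Eq.~\ref{eq:post}: starting from the prior ${\cal B}(\theta)$, the posterior after observing $x_1$ is supported on $\{0,1\}$ and is therefore a Bernoulli distribution, with parameter
\begin{equation}
\theta_1 = \frac{\theta\,{\cal L}_{x_1 1}}{\theta\,{\cal L}_{x_1 1} + (1-\theta)\,{\cal L}_{x_1 0}} \; ,
\end{equation}
which manifestly depends only on $\nu$, $x_1$ and $\theta_0\,\hat{=}\,\theta$.

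For the induction step, I would first argue that orders are conditionally independent given $Y$. This follows from the model specification: at each step the type of the trader (informed or noise, selected with probability $\nu$) is drawn independently of the past, the informed trader always posts according to $Y$, and the noise trader posts uniformly at random. Hence $p(X_n=x\mid Y=y, x_{1:n-1}) = p(X_n=x\mid Y=y) = {\cal L}_{xy}$, and by iterated use of Bayes rule
\begin{equation}
p(Y=y\mid x_{1:n}) \;\propto\; p(Y=y\mid x_{1:n-1})\,{\cal L}_{x_n y} \; .
\end{equation}

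Assuming the induction hypothesis $p(Y\mid x_{1:n-1})\sim{\cal B}(\theta_{n-1})$, this recursion has exactly the same algebraic form as the $n=1$ case, with $\theta$ replaced by $\theta_{n-1}$ and $x_1$ by $x_n$. The normalized posterior is therefore again supported on $\{0,1\}$, i.e.\ Bernoulli, with parameter
\begin{equation}
\theta_n = \frac{\theta_{n-1}\,{\cal L}_{x_n 1}}{\theta_{n-1}\,{\cal L}_{x_n 1} + (1-\theta_{n-1})\,{\cal L}_{x_n 0}} \; ,
\end{equation}
which depends only on $\nu$, $x_n$ and $\theta_{n-1}$, completing the induction.

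There is no real obstacle here: the proposition is essentially a reformulation of the standard fact that a Bernoulli family is closed under Bayesian updating with likelihoods indexed by $y\in\{0,1\}$. The only point worth emphasizing is the Markovian character of the update — that $\theta_n$ depends on the past only through $\theta_{n-1}$ — which is the content that will be used downstream, since it is what allows each step of the game to be analyzed in isolation in Subsection \ref{subsec:entropy}. The mildly non-trivial input is the conditional independence of orders given $Y$, which rests on the assumption that neither trader conditions his action on the realized history (beyond what is already encoded in the posted bid/ask prices, themselves functions of $\theta_{n-1}$).
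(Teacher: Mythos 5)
Your proof is correct and follows the same route as the paper: the base case from Eq.~\ref{eq:post} followed by recursion on $n$, with the same closed-form update for $\theta_n$ (noting that your ${\cal L}_{x_n 0}$ equals the paper's $1-{\cal L}_{x_n 1}$ since the likelihood rows sum to one). You merely make explicit the conditional independence of orders given $Y$, which the paper leaves implicit when it writes $p(X_n=x_n\mid Y=y)$ inside the Bayes update.
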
 
\begin{proof}
This holds true for $n=1$ (as displayed in Eq.~\ref{eq:post}) and trivially extends to arbitrary $n\geq1$ by recursion:
\be
\begin{aligned}
\theta_n \; \hat{=} \; p(Y=1|x_{1:n}) & = \frac{p(X_n=x_n|Y=1)p(Y=1|x_{1:n-1})}{p(X_n=x_n|Y=1)p(Y=1|x_{1:n-1})+p(X_n=x_n|Y=0)p(Y=0|x_{1:n-1})} 
\\
& = \frac{\theta_{n-1} {\cal L}_{x_n1}}{\theta_{n-1} {\cal L}_{x_n1} + (1-\theta_{n-1})(1-{\cal L}_{x_n1})} \; .
\end{aligned}
\ee
\end{proof}

Using Bayesian parlance, the conjugacy of the market maker's prior and the likelihood (both Bernoulli) ensure that the posterior distribution remains a Bernoulli.
An important observation is that this is the only thing that changes from one step to another, such that each step can be regarded as independent from the previous one, 
using the updated parameter $\theta_n$ associated with the variable of interest $Y|x_{1:n}$ (by convention $\theta_0=\theta$). 
Note that $\theta_n$ is a function of the realized order trajectory $x_{1:n}$ and, as such, is a random variable.
In fact $\theta_n=\mathbb{E}[Y|x_{1:n}]$ is nothing else but the price of the asset at step $n$.
It is actually quite simple to analytically obtain its probability distribution, as we shall show in Section \ref{sec:Num}, but we do not need it to prove Lemma \ref{l1}.

\begin{proposition}[P2]
Let $p(Y|x_{1:n-1})\sim{\cal B}(\theta_{n-1})$ and denote $q\;\hat{=}\;\frac{1+\nu}{2}$ and $z_{n-1}\;\hat{=}\;q\theta_{n-1}+(1-q)(1-\theta_{n-1})$.
Finally, let 
\begin{align*}
h \colon [0, 1] &\to [0, \log{2}]
\\
x & \mapsto -x\log{x} - (1-x)\log(1-x)
\end{align*}
be the binomial entropy function. Lemma \textbf{(L1)} is true iff so is the following inequality:
\be
\forall q \in [\frac{1}{2}, 1] \;\; \forall \theta_{n-1} \in [0, 1] \;\; \frac{\theta_{n-1}(1-\theta_{n-1})(2q-1)(1-q)}{z_{n-1}(1-z_{n-1})} \leq \frac{h(z_{n-1})-h(q)}{q\log\left(\frac{q}{1-q}\right)}
\ee
\end{proposition}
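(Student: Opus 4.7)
The plan is to reduce Lemma (L1) to the scalar inequality of (P2) by evaluating both sides in closed form using the Bayesian structure from Proposition (P1). Every step of the reduction will be an exact equality, so the resulting equivalence is automatically an ``iff''. For brevity I write $\theta\equiv\theta_{n-1}$, $z\equiv z_{n-1}$, and use $\nu=2q-1$ together with $T^{-1}=q\log(q/(1-q))$.

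I would first evaluate the left-hand side $\mathbb{E}[\mu_i|x_{1:i-1}]$. The likelihood $\mathcal{L}_{xy}=(1-\nu)/2+\nu\delta_{xy}$ does not depend on past orders, so conditional on $x_{1:i-1}$ the order $X_i$ is Bernoulli$(z)$, and Bayes' rule gives the explicit posteriors $a_i=q\theta/z$ and $b_i=(1-q)\theta/(1-z)$. Substituting these into $\mu_i=U_i[b_i(1-Y)+(1-a_i)Y]$, averaging over $Y\sim\mathcal{B}(\theta)$ and using $\mathbb{E}[U_i]=\nu$, the two contributions share a common factor $(1-q)\theta(1-\theta)$ and collapse to
\begin{equation}
\mathbb{E}[\mu_i|x_{1:i-1}]=(2q-1)(1-q)\theta(1-\theta)\Bigl(\frac{1}{z}+\frac{1}{1-z}\Bigr)=\frac{(2q-1)(1-q)\theta(1-\theta)}{z(1-z)},
\end{equation}
which is precisely the LHS of the inequality in (P2).

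I would then evaluate the right-hand side $TI(Y;X_i|x_{1:i-1})$. The most economical route is the symmetric decomposition $I(Y;X_i|x_{1:i-1})=H(X_i|x_{1:i-1})-H(X_i|Y,x_{1:i-1})$. The first term equals $h(z)$ because $X_i|x_{1:i-1}\sim\mathcal{B}(z)$; the second reduces to $h(q)$ because, conditionally on $Y$, the order is Bernoulli with parameter $q$ or $1-q$, and the symmetry $h(q)=h(1-q)$ makes this independent of $\theta$. Multiplying $h(z)-h(q)$ by $T$ recovers the RHS of (P2).

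No genuine obstacle arises in this reduction: it is bookkeeping made possible by the Bernoulli--Bernoulli conjugacy already exploited in (P1). The real analytical difficulty, showing that the scalar inequality actually holds for all $q\in[1/2,1]$ and $\theta\in[0,1]$, is the task deferred to Section \ref{subsec:entropy}. The equality claim as $\nu\to 0$ should drop out of the same translation: the exact relation $z-1/2=(q-1/2)(2\theta-1)$, together with $h(1/2+\delta)=\log 2-2\delta^2+O(\delta^4)$ and $\log[q/(1-q)]=4(q-1/2)+O((q-1/2)^3)$, makes both sides of the reduced inequality equal to $4(q-1/2)\theta(1-\theta)$ at leading order in $q-1/2$, confirming the saturation announced in (L1).
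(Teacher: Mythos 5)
Your reduction is correct and gets to the same scalar inequality as the paper, but you take a noticeably cleaner route for the mutual-information side. The paper computes $I(Y;X_n|x_{1:n-1})$ from the raw definition, writing out the full four-term sum over $(x,y)\in\{0,1\}^2$ and then simplifying, whereas you use the symmetric decomposition $I(Y;X_i|x_{1:i-1}) = H(X_i|x_{1:i-1}) - H(X_i|Y,x_{1:i-1})$. This makes the answer $h(z)-h(q)$ immediate: $X_i|x_{1:i-1}\sim\mathcal{B}(z)$ gives the first term, and conditionally on $Y$ the order is $\mathcal{B}(q)$ or $\mathcal{B}(1-q)$ with $h(q)=h(1-q)$, so the second term is $h(q)$ independent of $\theta$. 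That shortcut also explains, structurally, why $q$ and $z$ are the natural variables. Your compact forms $a_i=q\theta/z$, $b_i=(1-q)\theta/(1-z)$ for the quotes are equivalent to the paper's Eqs.~(\ref{eq:bid})--(\ref{eq:ask}) and make the common factor $(1-q)\theta(1-\theta)$ in the payoff transparent; the paper instead passes through the bid--ask spread $s_n$ and the identity $\mathbb{E}[\mu_n|x_{1:n-1}]=\frac{1-\nu}{2}s_n$. Your leading-order Taylor check of saturation at $\nu\to 0$ (both sides $\to 4(q-\tfrac12)\theta(1-\theta)$) is correct and is something the paper does not spell out at this point. One small point worth being explicit about: the ``iff'' over \emph{all} $\theta_{n-1}\in[0,1]$ is justified because $\theta_{n-1}$ ranges over the whole interval as the prior $\theta$ and trajectory vary (indeed already at $n=1$, where $\theta_0=\theta$ is free); your phrase ``every step is an exact equality'' covers the forward direction cleanly, and this observation covers the converse.
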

\begin{proof}
Using $p(Y|x_{1:n-1})\sim{\cal B}(\theta_{n-1})$, let us make the inequality from Lemma \ref{l1} more explicit by evaluating its various terms.
Starting with the expected payoff at step $n$, 
\be
\begin{aligned}
\mathbb{E}[\mu_n|x_{1:n-1}] & = \nu\left( b_n(1-\mathbb{E}[Y|x_{1:n-1}])+(1-a_n)\mathbb{E}[Y|x_{1:n-1}] \right)
\\
& = \nu \left(b_n(1-\theta_{n-1})+(1-a_n)\theta_{n-1} \right) \; ,
\end{aligned}
\ee
this requires expressing bid and ask prices:
\be
\label{eq:bid}
b_n = p[Y=1|x_{1:n-1}, X_n=0] = \frac{(1-\nu)\theta_{n-1}}{(1-\nu)\theta_{n-1}+(1+\nu)(1-\theta_{n-1})} \; ,
\ee
and
\be
\label{eq:ask}
a_n = p[Y=1|x_{1:n-1}, X_n=1] = \frac{(1+\nu)\theta_{n-1}}{(1+\nu)\theta_{n-1}+(1-\nu)(1-\theta_{n-1})} \; .
\ee
Note that the bid-ask spread
\begin{equation}
\begin{aligned}
s_n \; & \hat{=} \; a_n-b_n 
\\
& = \frac{4\nu\theta_{n-1}(1-\theta_{n-1})}{\left((1+\nu)\theta_{n-1}+(1-\nu)(1-\theta_{n-1})\right)\left((1-\nu)\theta_{n-1}+(1+\nu)(1-\theta_{n-1})\right)}
\end{aligned}
\end{equation}
is positive as it should and shrinks as the market maker's estimator $\theta_{n-1}$ converges. Plugging this in the expression for the expected payoff at step $n$, 
one obtains quite remarkably that the latter is simply proportional to the bid-ask spread
\be
\mathbb{E}[\mu_n|x_{1:n-1}] = \frac{1-\nu}{2}s_n \; .
\ee
The other term appearing in the above inequality is the conditional mutual information, which is given by 
\be
\begin{aligned}
I(Y; X_n | x_{1:n-1}) & = \sum_{x, y}p(Y|x_{1:n-1}=y, X_n=x)\log\left(\frac{p(Y|x_{1:n-1}=y, X_n=x)}{p(Y|x_{1:n-1}=y)p(X_n=x)}\right)
\\
& = \sum_{x, y}{\cal L}_{xy}p(Y=y|x_{1:n-1})\log\left(\frac{{\cal L}_{xy}}{p(X_n=x)}\right)
\\
& = \frac{1+\nu}{2}(1-\theta_{n-1})\log\left(\frac{(1+\nu)}{(1+\nu)(1-\theta_{n-1})+(1-\nu)\theta_{n-1}}\right)
\\
& + \frac{1-\nu}{2}\theta_{n-1}\log\left(\frac{(1-\nu)}{(1+\nu)(1-\theta_{n-1})+(1-\nu)\theta_{n-1}}\right)
\\
& + \frac{1-\nu}{2}(1-\theta_{n-1})\log\left(\frac{(1-\nu)}{(1-\nu)(1-\theta_{n-1})+(1+\nu)\theta_{n-1}}\right)
\\
& + \frac{1+\nu}{2}\theta_{n-1}\log\left(\frac{(1+\nu)}{(1-\nu)(1-\theta_{n-1})+(1+\nu)\theta_{n-1}}\right) \; .
\end{aligned}
\ee
Introducing new variables $q=(1+\nu)/2$ and $z_{n-1}=q\theta_{n-1}+(1-q)(1-\theta_{n-1})$, this cumbersome expression simplies considerably as
\be
I(Y; X_n | x_{1:n-1}) = h(z_{n-1}) - h(q)
\ee
where $h \colon x \mapsto -x\log{x}-(1-x)\log(1-x)$ is the binomial entropy function. Likewise, the expected payoff becomes
\be
\mathbb{E}[\mu_n| x_{1:n-1}] = \frac{(2q-1)(1-q)\theta_{n-1}(1-\theta_{n-1})}{z_{n-1}(1-z_{n-1})}
\ee
and the inverse temperature $T^{-1} = q\log(q/(1-q))$. Putting everything together, we arrive at the desired statement from \textbf{(P2)}.
\end{proof}

Along the way, we found the interesting result according to which the informed trader's payoff is proportional to the bid-ask spread set by the market maker. 
Another way to recover this result is to compute the expected gain of the noise trader which is simply $(1-\nu)(\frac{Y-a_n}{2}+\frac{b_n-Y}{2})=-\frac{1-\nu}{2}s_n$.
In other words, trading without any information leads to paying half the spread on average to the market maker. Since the latter breaks even in this model, the result follows.
As a consequence, informed traders can still make money if the market maker earns an additional fraction $m$ of the spread as a fee, provided $\nu m<\frac{1-\nu}{2}$. 
In particular, a single informed trader can choose his trading frequency to be $\nu<(1+2m)^{-1}$ to ensure he makes a profit. 
The total gain achieved will be reduced by a factor $1-\frac{2q-1}{1-q}m$.

\subsection{Entropic inequality}
\label{subsec:entropy}

The inequality only depends on underlying parameters $\nu$ and $\theta_{n-1}$.
Thus, as explained in the previous section, if it can be proven irrespective of the value of $\theta_{n-1}$, it shall hold for any step which is why we now focus on step $n=1$.

\begin{lemma}[L2]
\label{lem2}
Let $q\in[\frac{1}{2}, 1]$, $\theta\in[0,1]$ and $z=q\theta+(1-q)(1-\theta)$. Let $h$ be the binomial entropy function. The following inequality holds:
\be
\frac{\theta(1-\theta)(2q-1)(1-q)}{z(1-z)} \leq \frac{h(z)-h(q)}{q\log\left(\frac{q}{1-q}\right)}
\ee
with equality only as $q\to\frac{1}{2}$.
\end{lemma}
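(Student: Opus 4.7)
Fix $q \in [\tfrac{1}{2}, 1]$ and treat Lemma~\ref{lem2} as a statement about $z = (1-\theta)(1-q) + \theta q \in [1-q, q]$, which is the natural variable since $\theta(1-\theta) = (z-(1-q))(q-z)/(2q-1)^2$ and $(z-(1-q))(q-z) = z(1-z) - q(1-q)$. The inequality is then equivalent to $g(z) \geq 0$ on $[1-q,q]$, where
\[
g(z) \;\hat{=}\; \bigl[h(z) - h(q)\bigr] - A\left[1 - \frac{q(1-q)}{z(1-z)}\right], \qquad A \;\hat{=}\; \frac{q(1-q)\log(q/(1-q))}{2q-1}.
\]
My plan is to show that $g$ vanishes at both endpoints of $[1-q,q]$, has a unique interior critical point (at $z=1/2$), and is strictly positive there, forcing $g \geq 0$ throughout by monotonicity between consecutive critical points.

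The boundary conditions $g(1-q) = g(q) = 0$ are immediate, since $h(1-q) = h(q)$ and $z(1-z) = q(1-q)$ at the endpoints. For the critical points I would compute
\[
g'(z) = \log\!\frac{1-z}{z} + \frac{Aq(1-q)(2z-1)}{[z(1-z)]^2}
\]
and pass to $u = 2z-1$, $\alpha = 2q-1$, using $\log\tfrac{1+t}{1-t} = 2\,\mathrm{artanh}(t)$. The equation $g'(z) = 0$ then factors cleanly as $u = 0$ (giving $z = 1/2$) or $\Phi(u) = \Phi(\alpha)$, where $\Phi(t) \;\hat{=}\; (1-t^2)^2\,\mathrm{artanh}(t)/t$ (extended continuously by $\Phi(0) = 1$).

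The crux is then to prove $\Phi$ is strictly decreasing on $(0, 1)$. A direct differentiation gives $\Phi'(t) = \frac{1-t^2}{t^2}\bigl[t - (1+3t^2)\,\mathrm{artanh}(t)\bigr]$, and the bracket is negative on $(0,1)$ because $\mathrm{artanh}(t) \geq t$ (immediate from the series $\mathrm{artanh}(t) = t + t^3/3 + t^5/5 + \cdots$), so $(1+3t^2)\,\mathrm{artanh}(t) \geq t + 3t^3 > t$. Since $\Phi$ is even, the only solutions of $\Phi(u) = \Phi(\alpha)$ in $(-1,1)$ are $u = \pm\alpha$; combined with $u = 0$, the critical points of $g$ on $[1-q,q]$ are precisely $z \in \{1-q,\,1/2,\,q\}$.

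Finally, I would evaluate $g(1/2) = \log 2 - h(q) - (2q-1)q(1-q)\log(q/(1-q))$ and verify it is strictly positive for $q > 1/2$. Differentiating with respect to $\alpha$ yields $\tfrac{d}{d\alpha}g(1/2) = \tfrac{1}{2}\bigl[(1+3\alpha^2)\,\mathrm{artanh}(\alpha) - \alpha\bigr]$, which is positive on $(0,1)$ by the \emph{same} auxiliary inequality used above; together with $g(1/2)|_{\alpha=0} = 0$, this establishes $g(1/2) > 0$. Since $g$ is smooth on $[1-q,q]$, vanishes at both endpoints, and admits a single strictly-positive interior critical point, monotonicity between consecutive critical points yields $g \geq 0$ throughout, with global equality only in the limit $\alpha \to 0$ (i.e., $\nu \to 0$). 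The anticipated obstacle, the critical-point analysis, thus collapses onto the elementary estimate $(1+3t^2)\,\mathrm{artanh}(t) > t$ on $(0, 1)$, which also drives the positivity of $g(1/2)$.
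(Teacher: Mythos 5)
Your proof is correct and takes a genuinely different organizational route from the paper's, though the two arguments converge on the same elementary estimate. Both start from the identity $\theta(1-\theta)(2q-1)^2 = z(1-z) - q(1-q)$, but the paper (Propositions P3--P4) then exploits the $\theta\leftrightarrow 1-\theta$ symmetry (i.e.\ $z\leftrightarrow 1-z$) to restrict to $z\in[\tfrac12,q]$, introduces the auxiliary function $f(x) = -h^{(3)}(q)h(x) + h^{(1)}(q)h^{(2)}(x)$, and proves $f$ is monotone decreasing there, so $f(z)\geq f(q)$. Your $g(z)$ is exactly $\bigl(f(z)-f(q)\bigr)/\bigl(-h^{(3)}(q)\bigr)$, so the inequality to be proven is the same; what differs is the strategy. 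You work on the full interval $[1-q,q]$, use the boundary values $g(1-q)=g(q)=0$, factor $g'(z)=0$ cleanly into $u=0$ or $\Phi(u)=\Phi(\alpha)$ via the substitution $u=2z-1$, $\alpha=2q-1$, prove $\Phi$ is strictly decreasing to pin the critical set to $\{1-q,\tfrac12,q\}$, and then supply a separate one-parameter differentiation in $\alpha$ to get $g(\tfrac12)>0$. The paper's nested derivative chain in P4 ultimately reduces to showing $2z-1 - 2(1-3z+3z^2)\log\bigl(\tfrac{z}{1-z}\bigr)\leq 0$, which under $u=2z-1$ is precisely your bracket $u - (1+3u^2)\operatorname{artanh}(u)\leq 0$; both proofs therefore bottom out at the same estimate $(1+3t^2)\operatorname{artanh}(t)>t$ on $(0,1)$. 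What your approach buys is a transparent global picture of the extrema of $g$ on the whole interval, at the cost of one extra step (the midpoint positivity, requiring the core estimate a second time); the paper's symmetry reduction avoids that step and is marginally shorter, at the cost of framing the problem as a less intuitive statement about derivatives of $h$.
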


The inequality is depicted in Figure \ref{fig:inequality}.
\begin{figure}[]
\begin{center}
\includegraphics[angle=0,width=0.7\linewidth]{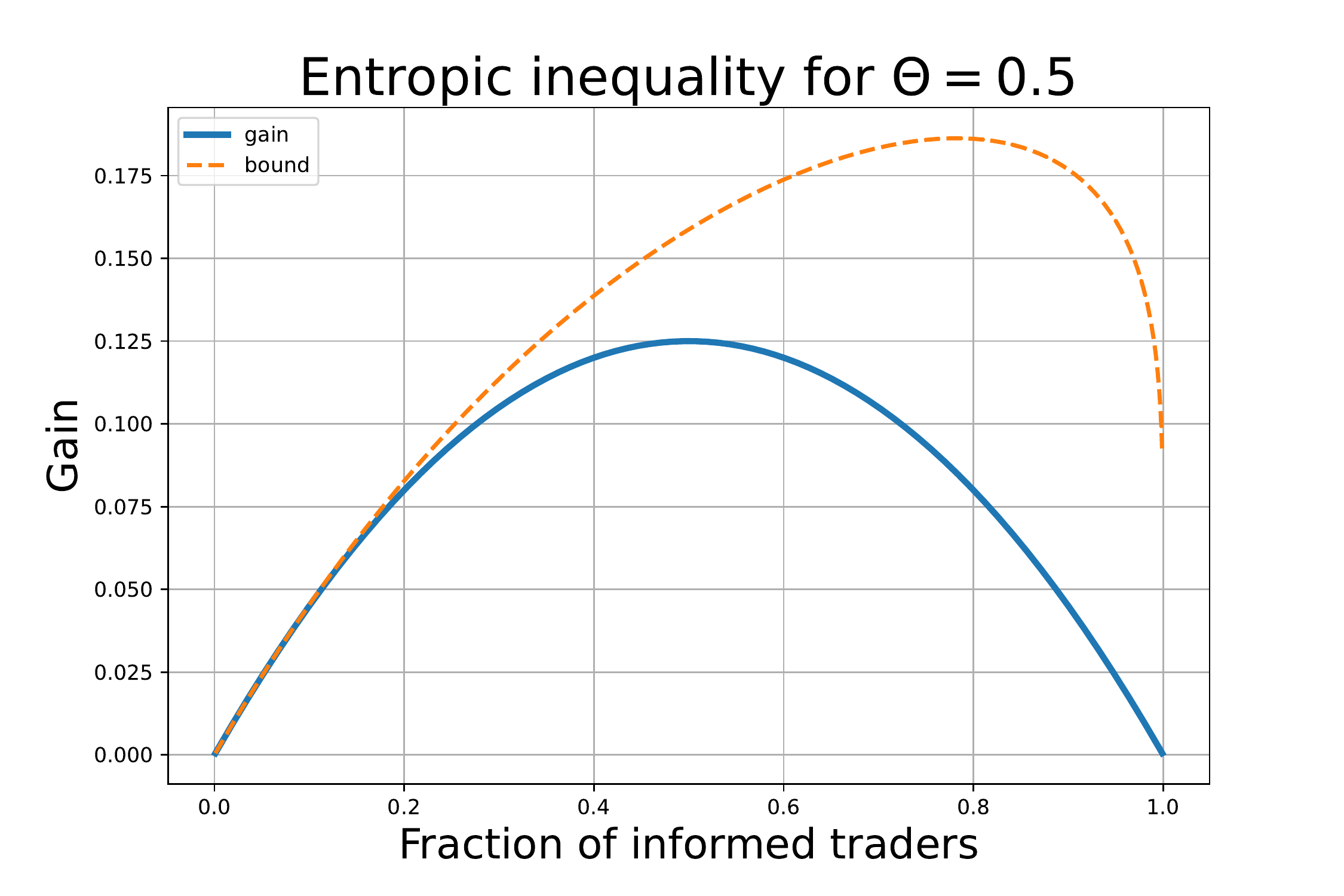}
\caption{Single step expected gain vs thermodynamic bound as a function of $\nu$ for $\theta=0.5$. The tightness of the bound as $\nu\to0$ is apparent.}
\label{fig:inequality}
\end{center}
\end{figure}
Note that both $2q-1$ and $\log(q/(1-q))$ are positive since $1/2\leq q \leq 1$.
By passing the denominator on the right-hand side to the left-hand side and grouping terms, this becomes
\be
(2q-1)\log\left(\frac{q}{1-q}\right)\frac{q(1-q)\theta(1-\theta)}{z(1-z)} \leq h(z) - h(q) \; .
\ee
This expression is invariant under $\theta\to 1-\theta$ and $q\to 1-q$ transformations (noting that $z\to 1-z$ in both cases), 
which is why we choose to restrict $1/2\leq \theta \leq 1$ such that $1/2\leq z \leq q$.

\begin{proposition}[P3]
Let $h$ be the binomial entropy function and let $h^{(i)}=\frac{d^ih}{dx^i}$ be the order $i$ derivative.
Lemma \textbf{(L2)} is true iff the following entropic inequality holds, $\forall q \in [\frac{1}{2}, 1] \;\; \forall z \in [\frac{1}{2}, q]$
\be
\frac{h^{(1)}(q)}{h^{(3)}(q)}\left(h^{(2)}(z)-h^{(2)}(q)\right) \leq h(z) - h(q)
\ee
or equivalently 
\be
f(z) \geq f(q)
\ee
where $f \colon x \mapsto -h^{(3)}(q)h(x) + h^{(1)}(q)h^{(2)}(x)$.
\end{proposition}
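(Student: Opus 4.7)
The plan is to reduce Lemma \textbf{(L2)} to a statement purely in the pair $(q,z)$ by a change of variables that eliminates $\theta$, then to recognize the coefficients as the first three derivatives of $h$ at $q$, and finally to multiply by $h^{(3)}(q)$ to reach the compact form $f(z)\geq f(q)$. All steps will be algebraic equivalences on the interior of the parameter range, with the boundary $q=1/2$ handled by continuity.

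First, I would eliminate $\theta$. Since $z=(1-q)+(2q-1)\theta$ is affine in $\theta$, we obtain $\theta=(z-(1-q))/(2q-1)$ and $1-\theta=(q-z)/(2q-1)$. Expanding the product gives the key identity
\[
\theta(1-\theta)\;=\;\frac{(z+q-1)(q-z)}{(2q-1)^2}\;=\;\frac{z(1-z)-q(1-q)}{(2q-1)^2},
\]
so that the left-hand side of \textbf{(L2)} reads
\[
\log\!\left(\frac{q}{1-q}\right)\frac{q(1-q)}{2q-1}\cdot\frac{z(1-z)-q(1-q)}{z(1-z)}.
\]

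Second, I would identify the coefficients as derivatives of $h$. A direct computation yields $h^{(1)}(x)=\log((1-x)/x)$, $h^{(2)}(x)=-1/(x(1-x))$, and $h^{(3)}(x)=(1-2x)/(x^2(1-x)^2)$. Consequently $\log(q/(1-q))=-h^{(1)}(q)$ and $q^2(1-q)^2/(2q-1)=-1/h^{(3)}(q)$, whence the product $\log(q/(1-q))\,q^2(1-q)^2/(2q-1)$ equals $h^{(1)}(q)/h^{(3)}(q)$. Noting also that $(z(1-z)-q(1-q))/(q(1-q)z(1-z))=h^{(2)}(z)-h^{(2)}(q)$, the rewritten left-hand side becomes $(h^{(1)}(q)/h^{(3)}(q))\bigl(h^{(2)}(z)-h^{(2)}(q)\bigr)$, and \textbf{(L2)} is precisely the first inequality displayed in \textbf{(P3)}.

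Finally, I would multiply through by $h^{(3)}(q)$. For $q>1/2$ we have $h^{(3)}(q)=(1-2q)/(q^2(1-q)^2)<0$, so the inequality flips; rearranging and grouping all $z$-dependent terms on the left gives $-h^{(3)}(q)h(z)+h^{(1)}(q)h^{(2)}(z)\geq -h^{(3)}(q)h(q)+h^{(1)}(q)h^{(2)}(q)$, i.e.\ $f(z)\geq f(q)$. The main (and only mildly painful) obstacle is the algebraic identity of paragraph one — the rest is pattern-matching and a single sign flip — together with a careful check that the reformulation degenerates harmlessly at $q=1/2$, where $z=1/2$ forces both sides of \textbf{(L2)} to vanish identically and the equivalent inequalities must be read by continuity.
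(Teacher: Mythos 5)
Your proposal is correct and follows essentially the same route as the paper: eliminate $\theta$ via the identity $\theta(1-\theta)(2q-1)^2 = z(1-z)-q(1-q)$, recognize the resulting coefficients as $h^{(1)}(q)$, $h^{(2)}$, $h^{(3)}(q)$, and finally clear $h^{(3)}(q)$ (flipping the inequality because it is negative) to obtain $f(z)\ge f(q)$. The only added touch is the explicit remark on treating $q=1/2$ by continuity, which is a harmless refinement rather than a different argument.
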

\begin{proof}
Starting from the inequality in Lemma \ref{lem2} and using that $\theta(1-\theta)(2q-1)^2 = (z(1-z)-q(1-q))$, the inequality can be reformulated as 
\be
\frac{q^2(1-q)^2}{2q-1}\log\left(\frac{q}{1-q}\right)\left(\frac{1}{q(1-q)}-\frac{1}{z(1-z)}\right) \leq h(z)-h(q) \; .
\ee
Interestingly, the left-hand side can be identified with various derivatives of the binomial entropy
\be
\begin{aligned}
& h^{(1)}(x) = -\log\left(\frac{x}{1-x}\right)
\\
& h^{(2)}(x) = - \frac{1}{x(1-x)}
\\
& h^{(3)}(x) = - \frac{2x-1}{x^2(1-x)^2}
\\
& h^{(4)}(x) = -2\frac{1-3x+3x^2}{x^3(1-x)^3}
\end{aligned}
\ee
where the last one is introduced for further reference. Note that all these derivatives are negative on the interval $[1/2, 1]$. This yields
\be
\frac{h^{(1)}(q)}{h^{(3)}(q)}\left(h^{(2)}(z)-h^{(2)}(q)\right) \leq h(z) - h(q)
\ee
which is some kind of statement on the differential analysis of the binomial entropy function. 
A natural assumption might be that this can be proven somehow using the function's concavity, but it turns out the inequality is tighter than that.
Instead, introducing the auxiliary function $f \colon x \mapsto -h^{(3)}(q)h(x) + h^{(1)}(q)h^{(2)}(x)$, the inequality simply becomes $f(z) \geq f(q)$.
\end{proof}

As a consequence, Lemma \ref{lem2} is proven in particular if the following proposition holds.
\begin{proposition}[P4]
Let $f \colon x \mapsto -h^{(3)}(q)h(x) + h^{(1)}(q)h^{(2)}(x)$, where $h$ is the binomial entropy function. Then
\be
\forall q \in [\frac{1}{2}, 1] \;\; \forall z \in [\frac{1}{2}, q] \;\; f^{(1)}(z) \leq 0 \; .
\ee
\end{proposition}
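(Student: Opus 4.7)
The plan is to recast the sign condition on $f^{(1)}$ as the monotonicity of an explicit one-variable ratio, and then handle the resulting scalar inequality by a direct derivative computation that collapses onto a clean perfect-square factorization.

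First I would compute
\[
f^{(1)}(z) = -h^{(3)}(q)\, h^{(1)}(z) + h^{(1)}(q)\, h^{(3)}(z),
\]
and note, from the explicit formulas listed in the statement, that $h^{(1)}$ and $h^{(3)}$ are both strictly negative on $(1/2, 1)$ and vanish at $x = 1/2$; in particular $f^{(1)}(1/2) = f^{(1)}(q) = 0$ automatically. For $z \in (1/2, q]$, dividing by the positive quantity $h^{(1)}(z)\, h^{(1)}(q)$ reformulates $f^{(1)}(z) \leq 0$ as
\[
g(z) \leq g(q), \qquad g(x) := \frac{h^{(3)}(x)}{h^{(1)}(x)},
\]
so it suffices to show that $g$ is non-decreasing on $(1/2, 1)$.

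Next I would compute $g'$. A short calculation using the explicit forms of $h^{(k)}$ gives
\[
g'(x) \;=\; \frac{h^{(4)}(x)\, h^{(1)}(x) - h^{(3)}(x)\, h^{(2)}(x)}{\bigl(h^{(1)}(x)\bigr)^2} \;=\; \frac{\phi(x)}{x^3 (1-x)^3 \, \bigl(h^{(1)}(x)\bigr)^2},
\]
with
\[
\phi(x) := 2(1 - 3x + 3x^2)\log\!\left(\frac{x}{1-x}\right) - (2x-1).
\]
The entire problem thus reduces to showing $\phi \geq 0$ on $[1/2, 1]$, which I would do by checking that $\phi(1/2) = 0$ and that $\phi' \geq 0$. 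Differentiating and simplifying with the elementary identity $2(1-3x+3x^2) - 2x(1-x) = 2(2x-1)^2$ yields
\[
\phi'(x) = 2(2x-1)\!\left[\, 3 \log\!\left(\frac{x}{1-x}\right) + \frac{2x-1}{x(1-x)} \,\right],
\]
which is manifestly non-negative on $[1/2, 1]$ since each factor is. This closes the argument.

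The main obstacle is \emph{finding} the right reformulation: a naive attempt to bound $f^{(1)}$ by Taylor expansion from $z = 1/2$ via $f^{(2)}$ fails because $f^{(2)} = -h^{(3)}(q)h^{(2)}(z) + h^{(1)}(q)h^{(4)}(z)$ is the sum of two terms of opposite signs on $[1/2, 1]$. Passing instead to the ratio $g = h^{(3)}/h^{(1)}$ is what unlocks the problem; the perfect-square cancellation $2(2x-1)^2$ that renders $\phi'$ obviously non-negative is the small algebraic miracle that finishes it off.
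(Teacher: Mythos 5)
Your proof is correct and follows essentially the same route as the paper: reduce $f^{(1)}(z)\leq 0$ to monotonicity of a ratio of derivatives of $h$ (you use $h^{(3)}/h^{(1)}$ increasing, the paper the equivalent $h^{(1)}/h^{(3)}$ decreasing), then differentiate and land on the same auxiliary function $\phi$ (the paper's $g=-\phi$), and conclude via $\phi(1/2)=0$ and $\phi'\geq 0$. Your explicit identification of $1-4x+4x^2=(2x-1)^2$ as a perfect square is a small clarity improvement over the paper, which only observes this quantity is nonnegative.
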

\begin{proof}
Deriving this function yields
\be
f^{(1)}(z) = h^{(3)}(q)h^{(3)}(z)\left(\frac{h^{(1)}(q)}{h^{(3)}(q)} - \frac{h^{(1)}(z)}{h^{(3)}(z)}\right)
\ee
where the prefactor is positive. So negativity of the derivative is achieved in particular if the function 
\be
z \mapsto \frac{h^{(1)}(z)}{h^{(3)}(z)} = \frac{z^2(1-z)^2}{2z-1}\log\left(\frac{z}{1-z}\right)
\ee 
also has a negative derivative on the interval $[1/2, q]$. Simple algebra shows that
\be
\begin{aligned}
\left(\frac{h^{(1)}(z)}{h^{(3)}(z)}\right)^{(1)} & = \frac{h^{(2)}(z)h^{(3)}(z) - h^{(1)}(z)h^{(4)}(z)}{h^{(3)}(z)^2}
\\
& = \frac{z(1-z)}{(2z-1)^2}g(z)
\end{aligned}
\ee
where the prefactor is positive and $g \colon z \mapsto 2z-1 - 2(1-3z+3z^2)\log(z/(1-z))$. Finally, observe that
\be
g^{(1)}(z) = -6(2z-1)\log\left(\frac{z}{1-z}\right) - 2\frac{1-4z+4z^2}{z(1-z)} \leq 0
\ee
since $1-4z+4z^2$ is positive for $z\geq 1/2$. Given that $g(1/2)=0$, this implies that $g(z)\leq 0$ which allows concluding.
\end{proof}

\section{Numerical results}
\label{sec:Num}

Let us now provide some illustration of our results by simulating the Glosten-Milgrom model numerically.
This will also turn out to be useful to better understand the scaling properties of certain variables of interest.

\subsection{Price trajectories}

The starting point is to gain access to the statistics of price trajectories. There are 2 ways to proceed. 
The first is to generate a large number $N$ of market order trajectories $\{x_{1:n}^{(j)}\}_{1\leq j\leq N}$ of length $n$ for different values of $\nu$ and compute empirical moments.
For example, the expectation of the price at step $i$, which is trajectory-dependent as $\theta_i^{(j)}(\nu) = \theta_i\left(x_{1:i}^{(j)}, \nu\right)$,
can be estimated using the empirical average
\be
\bar{\theta_i}(\nu) = \frac{1}{N}\sum_{j=1}^N\theta_i^{(j)}(\nu) \; .
\ee
\begin{figure}[]
\begin{center}
\includegraphics[angle=0,width=0.7\linewidth]{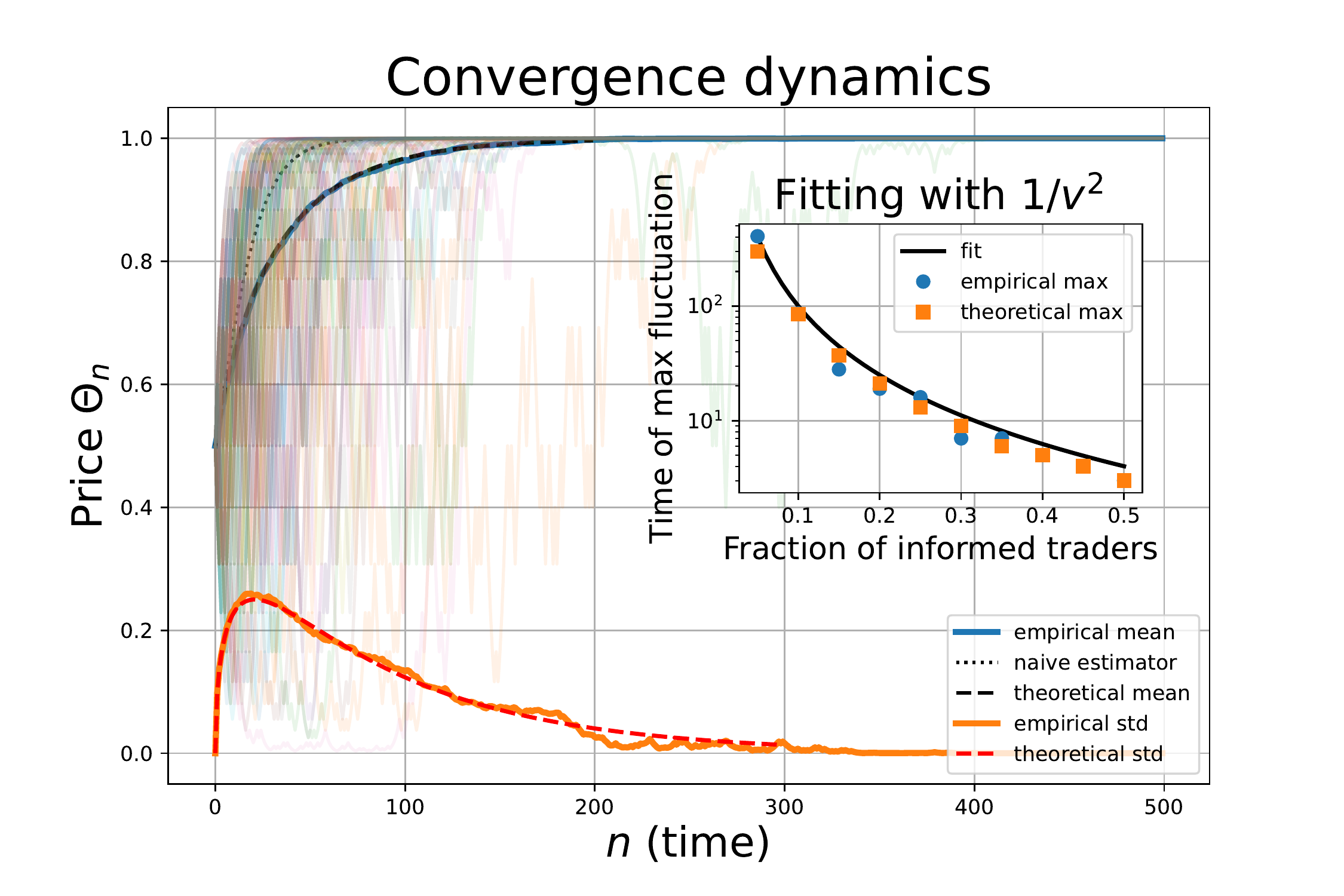}
\caption{Convergence of the market maker's estimator $\theta_n$ of the asset value as a function of time, starting from $\theta=0.5$ and assuming $Y=1$. 
Theoretical expectations are in good agreement with empirical results (obtained using $N=1000$ sample trajectories, 100 of which are displayed in the background for illustration).
Inset: The estimator's variance is maximized at a characteristic time, which seems to scale as $\nu^{-2}$.}
\label{fig:convergence}
\end{center}
\end{figure}
The second is to realize that the probability distribution of $\theta_i$, which encompasses the statistics of price trajectories and on which everything relies, 
can in fact be computed exactly using the map
\be
\begin{aligned}
{\cal L}_q \;\colon\; & [0, 1] \to [0, 1]
\\
& \theta \mapsto \frac{q\theta}{q\theta + (1-q)(1-\theta)} \; ,
\end{aligned}
\ee
recalling that $q=(1+\nu)/2$. 
This map features several nice properties:
\begin{itemize}
\item ${\cal L}_q$ is strictly increasing: $\theta\leq{\cal L}_q[\theta]\leq1$.
\item ${\cal L}_q$ is invertible: ${\cal L}_q^{-1} = {\cal L}_{1-q}$. 
\item When iterated, ${\cal L}_q$ converges exponentially fast to its fixed point:
\be
\begin{aligned}
{\cal L}_q^n[\theta] & \; \hat{=} \; \overbrace{ {\cal L}_q\circ\; ... \; \circ{\cal L}_q}^n \; [\theta]
\\
& = \frac{q^n\theta}{q^n\theta + (1-q)^n(1-\theta)}
\\
& = \frac{\theta}{\theta + (1-\theta)e^{-\frac{n}{\tau_q}}}
\end{aligned}
\ee
with $\tau_q \; \hat{=} \; qT$ the convergence timescale. This can be easily proven recursively.
\end{itemize}
The next proposition puts it to good use.

\begin{proposition}[P5]
Consider the map ${\cal L}_q \colon x \mapsto \frac{qx}{qx + (1-q)(1-x)}$ with $\frac{1}{2}\leq q\leq 1$. 
Then, given an initial condition $\theta\in[0, 1]$, the price at step $n\geq1$ is given by
\be
\theta_n(\beta) = {\cal L}^{2\beta-n}_q\left[\theta\right]
\ee
where $\beta|Y \sim {\cal B}_n\left(1-q+(2q-1)Y\right)$ is a binomial random variable. 
\end{proposition}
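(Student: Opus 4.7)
The plan is to unfold Proposition P1 one step at a time, realize that the Bayesian update reduces to applying either $\mathcal{L}_q$ or its inverse, and then count occurrences. Concretely, plugging the explicit likelihoods $\mathcal{L}_{11} = q$ and $\mathcal{L}_{01} = 1-q$ into the recursion of Proposition P1 shows that
\begin{equation}
\theta_n = \begin{cases} \mathcal{L}_q\bigl[\theta_{n-1}\bigr] & \text{if } x_n = 1, \\ \mathcal{L}_{1-q}\bigl[\theta_{n-1}\bigr] & \text{if } x_n = 0, \end{cases}
\end{equation}
and, as already noted, $\mathcal{L}_{1-q} = \mathcal{L}_q^{-1}$. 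Hence after $n$ observations in which the order $x_i = 1$ occurs $\beta$ times, the net map applied to $\theta$ is $\mathcal{L}_q^{\beta} \circ \mathcal{L}_q^{-(n-\beta)} = \mathcal{L}_q^{2\beta - n}$, where the composition may be rearranged freely because iterates of a single invertible map obviously commute. This already establishes the deterministic part of the claim: $\theta_n$ depends on the trajectory $x_{1:n}$ only through the count $\beta = \sum_{i=1}^n \mathbb{1}_{x_i = 1}$.

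A cleaner way to package the same step, which I would include as a sanity check, is to pass to the log-odds $\ell_n = \log(\theta_n/(1-\theta_n))$. In those coordinates the Bayesian update becomes additive, $\ell_n = \ell_{n-1} + (2x_n - 1)\log(q/(1-q))$, so after $n$ steps $\ell_n = \ell_0 + (2\beta - n)\log(q/(1-q))$, and inverting the log-odds transform reproduces $\theta_n = \mathcal{L}_q^{2\beta-n}[\theta]$ without any explicit bookkeeping of compositions.

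The second step is to determine the law of $\beta$ conditional on $Y$. By construction $X_i = Y$ with probability $\nu$ (informed trader) and $X_i \sim \mathcal{B}(1/2)$ with probability $1-\nu$ (noise trader), and these two mechanisms are independent across steps. Hence, conditionally on $Y$, the $X_i$ are i.i.d.\ Bernoulli with
\begin{equation}
p(X_i = 1 \mid Y) = \nu\, Y + \tfrac{1-\nu}{2} = (1-q) + (2q-1)Y,
\end{equation}
so $\beta \mid Y \sim \mathcal{B}_n\bigl(1-q + (2q-1)Y\bigr)$, as claimed.

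There is no real obstacle here; the proof is essentially a translation of the Bayesian update into a one-parameter group of Möbius transformations. The only conceptually subtle point worth flagging is the commutativity that lets $\theta_n$ depend solely on the count $\beta$, which the log-odds reformulation makes manifest. Once that is in place, the conditional law of $\beta$ follows directly from the definition of the noise/informed mixture.
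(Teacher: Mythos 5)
Your proof is correct and follows essentially the same route as the paper: identify each update as an application of $\mathcal{L}_q$ or $\mathcal{L}_q^{-1}$, use commutativity of integer iterates to collapse the trajectory dependence to the count $\beta$, and read off the binomial law of $\beta$ conditional on $Y$. The log-odds reformulation you add is a pleasant way to make the commutativity transparent but is not a different argument.
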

\begin{proof}
Denote $\beta=\sum_{i=1}^n\delta_{x_i, 1}$ the number of buy orders. 
The probability for a buy order is a Bernoulli ${\cal B}(1-q+(2q-1)Y)$, thus $\beta\sim{\cal B}_n(1-q+(2q-1)Y)$.
As can be seen from Eq.~\ref{eq:post}, every such order is associated with an iteration of the map ${\cal L}_q$.
In contrast, each of the $n-\beta$ sell orders is associated with an iteration of the map ${\cal L}_{1-q}$.
Using the invertibility of the map, the iterations commute, making the order in which they occur irrelevant. 
The result follows immediately: $\theta_n = {\cal L}_q^\beta\circ{\cal L}_q^{-(n-\beta)}[\theta] = {\cal L}_q^{2\beta-n}[\theta]$.
\end{proof}

\begin{figure}[]
\begin{center}
\includegraphics[angle=0,width=0.7\linewidth]{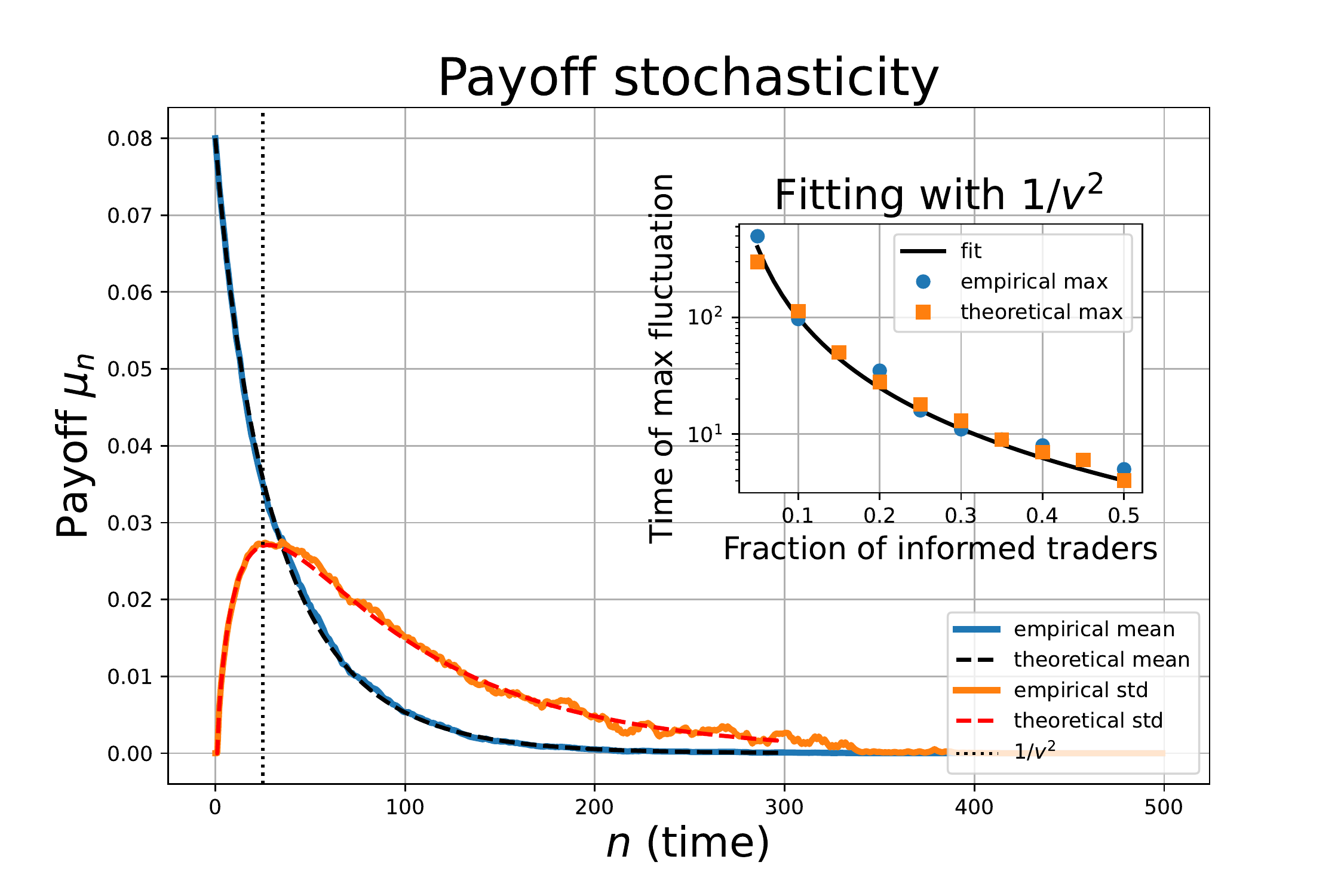}
\caption{Statistics of the informed trader's payoff $\mu_n$ as a function of time, assuming $\theta=0.5$ and $\nu=0.2$. 
The expected payoff decays exponentially with time.
Inset: Payoff variance seems to be maximized when the price set by the market maker fluctuates most.}
\label{fig:payoff}
\end{center}
\end{figure}
The price trajectory as a function of the incoming market orders can thus be seen as randomly flowing towards the map's fixed point with a Brownian drift $q-(1-q)=\nu$.
The expected value of $\theta_n$, obtained by averaging over the binomial distribution of $\beta$
\be
\mathbb{E}[\theta_n|Y=1] = \int d\beta\; p(\beta|Y=1)\theta_n(\beta) = \sum_{k=0}^n C^k_n q^k(1-q)^{n-k}{\cal L}_q^{2k-n}[\theta] \; ,
\ee
is plotted alongside the empirical average in Figure \ref{fig:convergence} as a function of $n$ for $\nu=0.2$. 
Note that the naive estimator ${\cal L}_q^{(2q-1)n}[\theta]$, replacing $\beta$ by its expectation $\mathbb{E}[\beta|Y=1] = nq$, does not work.
On the other hand, the characteristic time at which fluctuations $\mathbb{V}[\theta_n]$ are maximized seems to scale as $\nu^{-2}$, which does match the naive guess
\be
\frac{2\mathbb{E}[\beta]-n}{qT} = \frac{2q-1}{qT}n \sim_{\nu\to0} 2\nu^2n \; .
\ee

\subsection{Payoff statistics}

Next, we move to the stochastic payoff $\mu_i = (2q-1)\left(b_i(1-Y) + (1-a_i)Y\right)$ which depends on bid and ask prices. 
Taking a closer look at Eqs.~(\ref{eq:bid}, \ref{eq:ask}), we find that they can be easily expressed in terms of the map we introduced as
\be
\begin{aligned}
& b_{i+1} = {\cal L}_{1-q}[\theta_i] = {\cal L}_q^{2\beta-i-1}[\theta]
\\
& a_{i+1} = {\cal L}_q[\theta_i] = {\cal L}_q^{2\beta-i+1}[\theta]
\end{aligned}
\ee
which makes sense. As a consequence,
\be
\begin{aligned}
\mathbb{E}[\mu_{i+1}] & = \sum_{y\in\{0, 1\}} p(y)\int d\beta \; p(\beta|Y=y) \; \mu_{i+1}(\beta)
\\
& = (2q-1)\left((1-\theta)\sum_{k=0}^i C_k^i (1-q)^k q^{i-k}{\cal L}_q^{2k-i-1}[\theta] + \theta\sum_{k=0}^i C_k^i q^k (1-q)^{i-k}(1-{\cal L}_q^{2k-i+1}[\theta])\right)
\end{aligned}
\ee
which coincides with the expression obtained by Touzo et al. (reformulated in terms of ${\cal L}_q$). 
\begin{figure}[]
\begin{center}
\includegraphics[angle=0,width=0.7\linewidth]{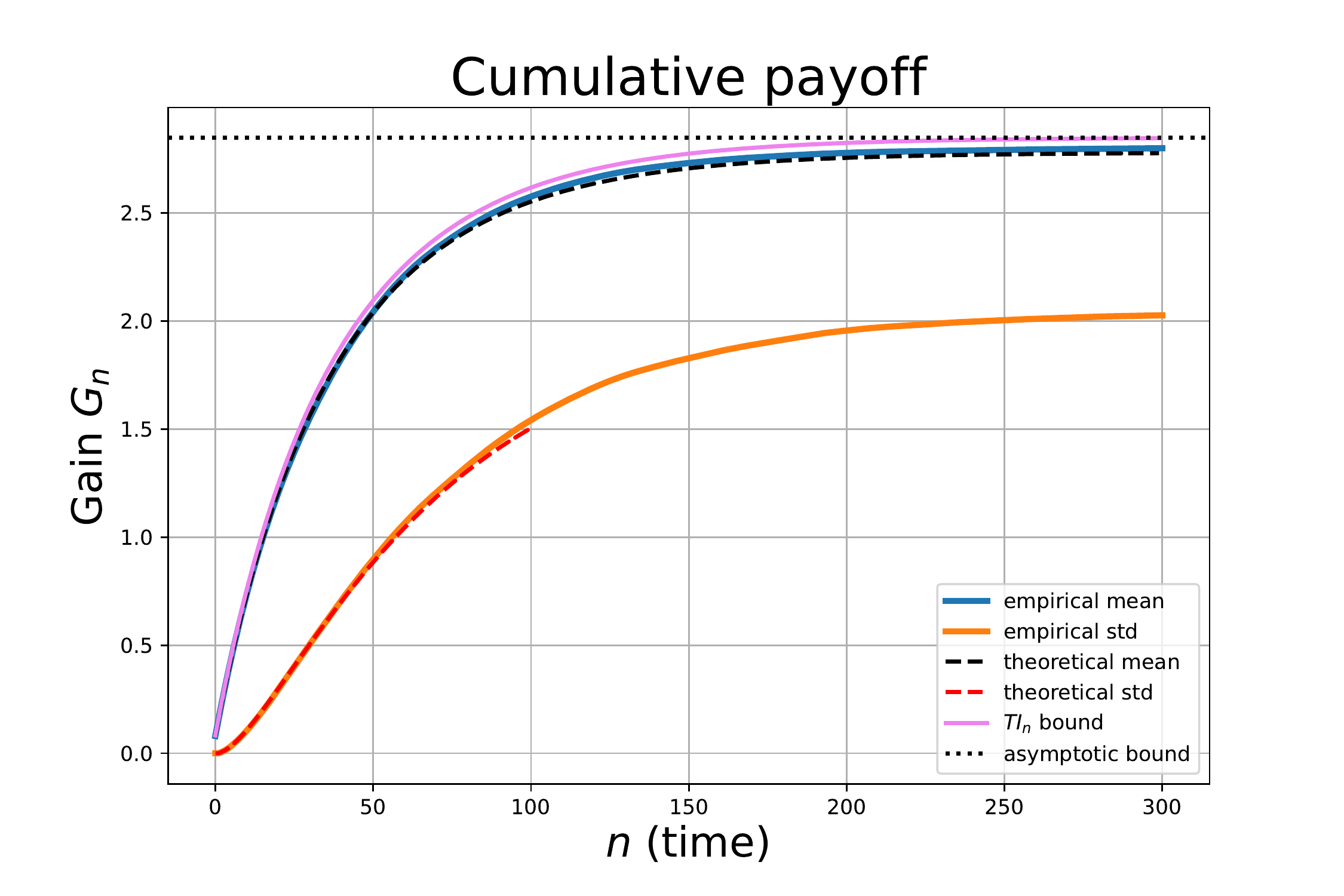}
\caption{Statistics of the informed trader's cumulated payoff $G_n=\sum_{i\leq n}\mu_i$ as a function of time, assuming $\theta=0.5$ and $\nu=0.2$. 
The expected gain is upper bounded at all times according to the expression from Theorem \ref{th}.
Gain fluctuations remain large asymptotically as a consequence of the positive autocorrelation of the payoff.}
\label{fig:gm_gain}
\end{center}
\end{figure}

Alternatively, we have already seen that $\mathbb{E}[\mu_{i+1}|x_{1:i}] = (1-q)s_{i+1}(\theta_i)$. 
It is not difficult to show that the spread can be expressed as a function of the random variable $\beta$ as
\be
s_{i+1}(\beta) = \frac{2q-1}{(1-q)^2}{\cal L}_q^{2\beta-i-1}[\theta]\left(1-{\cal L}_q^{2\beta-i+1}[\theta]\right) \; ,
\ee
yielding a different (yet equivalent) formulation for the expected payoff:
\be
\begin{aligned}
\mathbb{E}[\mu_{i+1}] & = (1-q)\left(\theta\sum_{k=0}^iC_i^kq^k(1-q)^{i-k}s_{i+1}(k) + (1-\theta)\sum_{k=0}^iC_i^k(1-q)^kq^{i-k}s_{i+1}(k)\right) 
\\
& = \frac{2q-1}{1-q}\theta\sum_{k=0}^iC_i^kq^k(1-q)^{i-k}\frac{{\cal L}_q^{2k-i-1}[\theta]\left(1-{\cal L}_q^{2k-i+1}[\theta]\right)}{{\cal L}_q^{2k-i}[\theta]} \; .
\end{aligned}
\ee
Payoff statistics are displayed in Figure \ref{fig:payoff} and seem to inherit some of the scaling properties of the price $\theta_n$.
\begin{figure}[]
\begin{center}
\includegraphics[angle=0,width=0.7\linewidth]{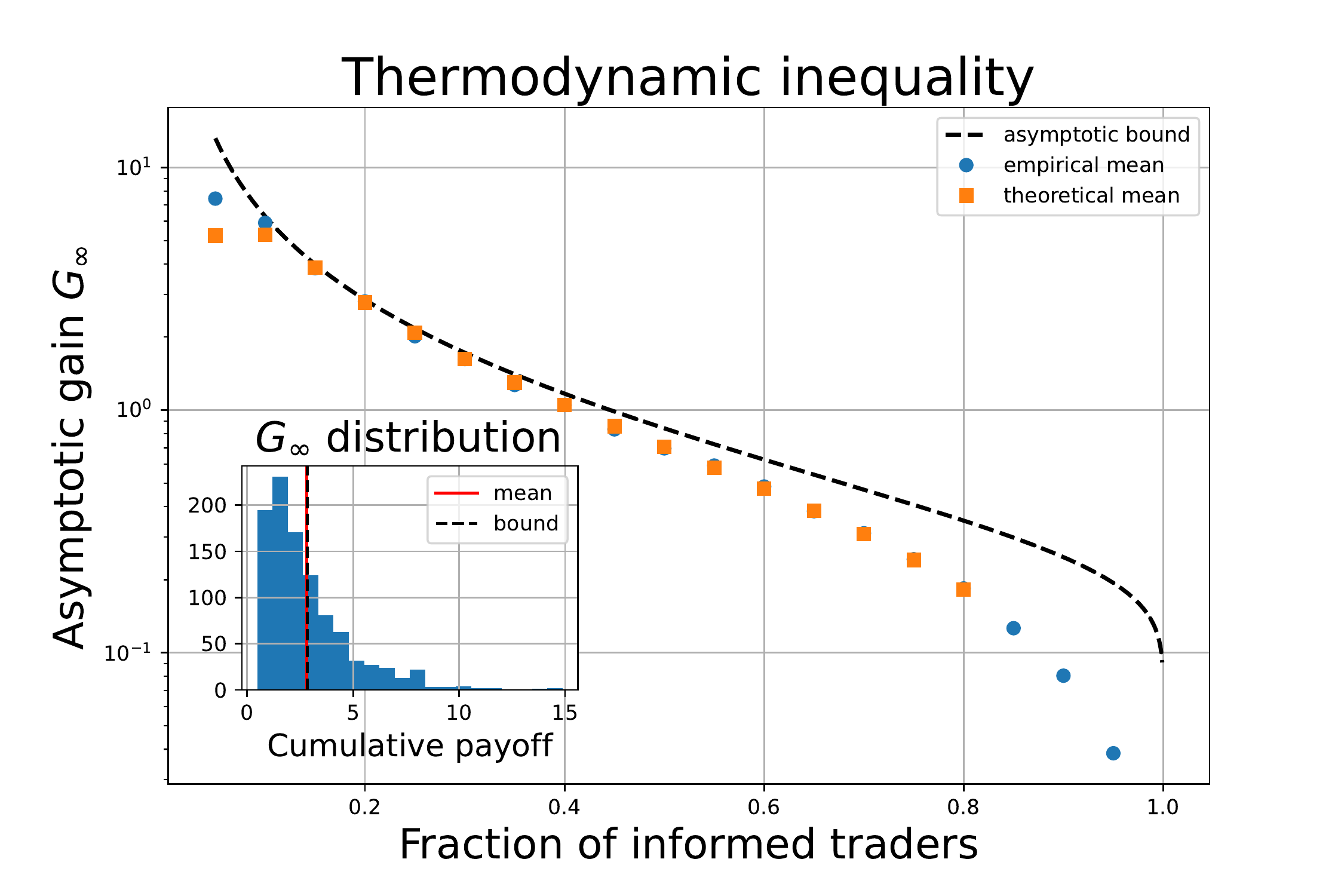}
\caption{Asymptotic expected gain vs $TH(Y)$ as a function of $\nu$, assuming $\theta=0.5$. 
Inset: Empirical histogram of the asymptotic gain for $\nu=0.2$. 
We see that the bound can be severely violated under certain (albeit rare) circumstances.}
\label{fig:asymptotic}
\end{center}
\end{figure}

\subsection{Thermodynamic inequality}

Let us now come to the finite-time bound. 
The latter is proportional to the mutual information which, using our new insight on the distribution of price trajectories, can be formulated as
\be
\begin{aligned}
I(Y; X_{1:n}) & = \sum_{y\in\{0, 1\}} p(y)\int d\beta\; p(\beta|Y=y)\log\frac{p(\beta|Y=y)}{p(\beta)}
\\
& = H(Y) - H(Y | X_{1:n}) \; ,
\end{aligned}
\ee
in terms of the conditional entropy
\be
H(Y | X_{1:n}) = -\theta\sum_{k=0}^n C^n_k q^k(1-q)^{n-k}\log\left({\cal L}_q^{2k-n}[\theta]\right) - (1-\theta) \sum_{k=0}^n C^n_k(1-q)^k q^{n-k}\log\left(1- {\cal L}_q^{2k-n}[\theta]\right) \; .
\ee
Alternatively, we have seen that mutual information can also be computed using the chain rule over conditional components
\be
I(Y; X_{1:n}) = \sum_{i=1}^nI(Y;X_i|X_{1:i-1}) = \sum_{i=1}^n\mathbb{E}[I(Y;X_i | \beta_i)]
\ee
in terms of the stochastic conditional mutual information 
\be
I(Y; X_{i+1} | \beta) = h\left(q\frac{{\cal L}_q^{2\beta-i}[\theta]}{{\cal L}_q^{2\beta-i+1}[\theta]}\right) - h(q) \; .
\ee
New information acquired at step $i+1$ follows by averaging:
\be
I(Y;X_{i+1}|X_{1:i}) + h(q) = \theta\sum_{k=0}^iC^i_kq^k(1-q)^{i-k}\frac{1}{{\cal L}_q^{2k-i}[\theta]}h\left(q\frac{{\cal L}_q^{2k-i}[\theta]}{{\cal L}_q^{2k-i+1}[\theta]}\right) \; .
\ee
An illustration of the finite-time bound holding is provided in Figure \ref{fig:gm_gain}.
We also find that gain fluctuations are quite significant.
In particular, it appears that $\mathbb{V}[G_n] > \sum_{i=1}^n \mathbb{V}[\mu_i]$ which means that payoffs are positively autocorrelated.
This can be checked numerically by computing the joint expectation
\be
\begin{aligned}
\mathbb{E}[\mu_n\mu_{n-i}] & = (1-q)^2\sum_{k=0}^n\sum_{j=0}^{n-i}p(\beta_n=k | \beta_{n-i}=j)p(\beta_{n-i}=j)s_n(k)s_{n-i}(j)
\\
& = (1-q)^2\sum_{k=0}^n\sum_{j=\max(k-i, 0)}^{\min(k, n-i)} C^i_{k-j}q^{k-j}(1-q)^{i-(k-j)}C^{n-i}_jq^j(1-q)^{n-i-j} s_n(k)s_{n-i}(j)
\\
& = (1-q)^2\sum_{k=0}^n q^k(1-q)^{n-k}s_n(k)\sum_{j=\max(k-i, 0)}^{\min(k, n-i)}C^i_{k-j}C^{n-i}_js_{n-i}(j) \; .
\end{aligned}
\ee
Finally, the asymptotic bound of Touzo et al. (recalled in the Corollary) is displayed in Figure \ref{fig:asymptotic}. 
The bound becomes tight as $\nu\to0$, though this is blurred by finite-size numerical effects for the lowest value of $\nu$ we considered.
That the (asymptotic) bound holds only in expectation is well seen by plotting the histogram of cumulative payoffs, as in the inset of Figure \ref{fig:asymptotic}. 
Better quantifying the magnitude of gain fluctuations could perhaps allow turning Theorem \ref{th} into a more general fluctuation theorem \cite{Seifert}.

\section{Discussion}
\label{sec:Concl}

To summarize, we have proven that the expected achievable gain of the informed trader at any point in time can be upper bounded by the amount of private information remaining at his disposal 
(or equivalently the remaining level of ignorance of the market maker).
The reasoning behind the proof can be reconstructed with the following chain of results:
\be
\left(P1 \cup P2 \cup L2\right) \Rightarrow L1 \Rightarrow Theorem \Rightarrow Corollary
\ee
\be
P3 \cup P4 \Rightarrow L2
\ee
with propositions P1-4 proven in stand-alone.
The proof exploits the structural invariance of the game at each step and relies on an inequality solely involving the binomial entropy function and its derivatives. 
This result generalizes that obtained in \cite{Touzo} and also allows bypassing the more technical considerations based on infinite sums used in section 3 of their paper to prove their result.

In this work, we made no distinction between a game composed of a population of informed traders (representing a fraction $\nu$ of the total population) and a single trader
acting with frequency $\nu$. While one could argue that $\nu$ can be thought of as a free parameter in the latter case but not in the former, the results proven in this paper 
hold regardless. A related, yet somewhat different, situation is to consider a game without any noise traders, such that the informed trader is forced to dilute
his informational advantage by only using it with probability $\nu$. Assuming the market maker still manages to break even in this case, the expected gain of the informed trader
should be zero (since the game is zero-sum). But this may no longer be true if the market maker has to learn the value of $\nu$, which is another sensible extension of the model.

In the setting considered in this paper, the maximum achievable gain by the informed trader diverges as $\nu\to0$ (since $T\sim \nu^{-1}$). 
In that limit, the measurement process becomes reversible, preventing the market maker from inferring any information from the order sequence. 
The caveat is that it takes an infinite amount of time $\tau\sim\nu^{-2}$ to cash in. 
Maximizing the expected gain in a finite time $\tau$, e.g. a terminal time at which the value of the asset shall be made public, introduces a tradeoff between 
minimizing information disclosure (small $\nu$) and posting enough trades (large $\nu$). 
As suggested by the above scaling (and supported by numerical evidence), $\nu\sim\tau^{-1/2}$ seems like the optimal strategy.

There are many other interesting research directions.
One would be to understand how the above picture changes if the market maker becomes risk-averse and thus no longer accepts bearing risk without any expected payoff in return. 
Another obvious one would be to investigate whether our result can be extended to a setting where the asset value is no longer static (as in \cite{Benz}).
Including a second asset correlated to the first could also allow studying how the incorporation of information in prices is blurred when informed traders pursue multiple
objectives (a canonical example being risk control) \cite{Lost}.
More generally, understanding how information diffuses among asymmetrically informed market participants from the perspective of information thermodynamics certainly seems like a direction worth exploring further.

\end{document}